\documentclass[11pt]{amsart}

\usepackage[margin=2.5cm]{geometry}
\usepackage{tikz}
\tikzset{node distance=2cm, auto}
\usetikzlibrary{matrix,arrows,decorations.pathmorphing}

\usepackage{graphicx}
\usepackage{amsmath,amssymb}

\usepackage{amscd}
\usepackage{verbatim}
\usepackage{enumerate}
\usepackage{amssymb,amsfonts}
\usepackage[all,arc]{xy}
\usepackage{enumerate}
\usepackage{mathrsfs}
\usepackage{bbm}
\usepackage{amsmath}           
\usepackage{amssymb}           
\usepackage{latexsym}          
\usepackage{mathrsfs}          
\usepackage{eucal}             
\usepackage{amsthm}            
\usepackage{mathtools}         
\usepackage{stmaryrd}          
\usepackage{esint}             
\usepackage{extarrows}         
\usepackage{enumerate}         
\usepackage{xcolor}            
\usepackage{graphicx}          
\usepackage{floatrow}
\usepackage{tikz}              
\usepackage{tikz-3dplot}       
\usepackage{tikz-cd}           
\usepackage{hyperref}  

\def\makeautorefname#1#2{\expandafter\def\csname#1autorefname\endcsname{#2}}

\def\equationautorefname~#1\null{(#1)\null}
\makeautorefname{footnote}{footnote}%
\makeautorefname{item}{item}%
\makeautorefname{figure}{Figure}%
\makeautorefname{table}{Table}%
\makeautorefname{part}{Part}%
\makeautorefname{appendix}{Appendix}%
\makeautorefname{chapter}{Chapter}%
\makeautorefname{section}{Section}%
\makeautorefname{subsection}{Section}%
\makeautorefname{subsubsection}{Section}%
\makeautorefname{theorem}{Theorem}%
\makeautorefname{thm}{Theorem}%
\makeautorefname{sta}{Statement}%
\makeautorefname{cor}{Corollary}%
\makeautorefname{lem}{Lemma}%
\makeautorefname{prop}{Proposition}%
\makeautorefname{pro}{Property}
\makeautorefname{conj}{Conjecture}%
\makeautorefname{conj}{Convention}%
\makeautorefname{defn}{Definition}%
\makeautorefname{notn}{Notation}
\makeautorefname{notns}{Notations}
\makeautorefname{rem}{Remark}%
\makeautorefname{rems}{Remarks}%
\makeautorefname{quest}{Question}%
\makeautorefname{exmp}{Example}%
\makeautorefname{ax}{Axiom}%
\makeautorefname{claim}{Claim}%
\makeautorefname{ass}{Assumption}%
\makeautorefname{asses}{Assumptions}%
\makeautorefname{con}{Construction}%
\makeautorefname{prob}{Problem}%
\makeautorefname{warn}{Warning}%
\makeautorefname{obs}{Observation}%

\newtheorem{thm}{Theorem}[section]

\newtheorem{prop}{Proposition}[section]

\theoremstyle{definition}
\newtheorem{defn}{Definition}[section]

\newtheorem{exmp}{Example}[section]

\newtheorem{rem}{Remark}[section]

\DeclareMathOperator{\Ad}{Ad}

\DeclareMathOperator{\Diff}{Diff}

\title{Spin Toda systems and N-point spherical Whittaker functions}

\author{Haoyi Li}
\address{H.L.: Qiuzhen College, Tsinghua University, Beijing}
\email{lihaoyi22@mails.tsinghua.edu.cn}

\begin{document}

\maketitle

\begin{abstract}

In this paper, we define and study a spin version of the open Toda integrable system, using the Iwasawa decomposition of a connected split real Lie group. We study spherical Whittaker functions as solutions of the spin Toda system. We also construct their N-point version and find the corresponding factorisation of the Casimir element, which yields asymptotic boundary KZB type equations.

\end{abstract}

\tableofcontents

\section{Introduction}

Representation theory of Lie groups plays a central role in the modern theory of integrable systems. Many classical integrable systems arise from invariant functions on Lie algebras or Lie groups, while their quantisations are obtained from invariant differential operators. Therefore, there is a deep connection between classical (resp. quantum) integrable systems and Poisson geometry (resp. harmonic analysis) on Lie groups. Familiar examples include Calogero-Moser systems, Toda systems, Ruijsenaars systems.

The Toda system is closely related (indeed, essentially equivalent) to the theory of Whittaker functions, as observed by Kostant\cite{kow,kos}. This representation-theoretic realisation provides the joint eigenfunctions of the commuting Toda Hamiltonians and extends the type A construction to arbitrary simple root systems and to various quantum-group analogues.

On the other hand, the Calogero-Moser system can be realized through symmetric spaces and spherical functions\cite{kaz}. In their ordinary form, spherical functions are complex-valued functions on a real Lie group $G$, which are bi-invariant with respect to the action of the maximal compact subgroup $K$. These systems serve as two of the most important exactly solvable quantum many-body problems. In \cite{res, sna}, a spin generalization of Calogero-Moser system was proposed by allowing arbitrary representations of the compact subgroup. (In the spirit of the orbit method, this amounts classically to allowing general orbits in the dual Lie algebra $\mathfrak{k}^*$).

In this paper, we use spherical Whittaker functions (instead of Whittaker functions) as solutions of Toda systems, and give a spin generalization of the usual open Toda chain. Let $G$ be a split real Lie group. Let $N$ denotes the unipotent subgroup of $G$, $A$ denotes the maximal split torus, $G=KAN$ be the Iwasawa decomposition. Suppose $(\sigma,V)$ is a finite dimensional representation of $K$ and $\chi$ is a character of $N$. Then the space of spherical Whittaker functions is defined as

$$
\begin{aligned}
&C^\infty((K,\sigma)\backslash G/(N,\chi))\\
=&\{f\in C^{\infty}(G)\otimes V: f(kan)=\frac{\sigma(k)}{\chi(n)}f(a),\ \forall k\in K, a\in A, n\in N\}
\end{aligned}
$$

It's easy to see that the restriction to the maximal split torus gives an isomorphism
$$
C^\infty((K,\sigma)\backslash G/(N,\chi))\cong C^\infty(A)\otimes V,\ \ \ f\mapsto \left.f\right|_A
$$
\begin{thm}
The twisted action of $\Omega\in Z(U(\mathfrak{g}))\cong\text{Diff}(G)^{G\times G}$ on $C^\infty(A)$ will give the spin Toda Hamiltonian
$$
H=-\frac{1}{2}\Delta-\sum_{i=1}^r\chi_i^2e^{-2\alpha_i(h)}-\sum_{i=1}^r\chi_i\sigma(k_i)e^{-\alpha_i(h)}
$$
where $\alpha_1,...,\alpha_r$ are the simple roots, $\chi_i=d\chi(e_{\alpha_i})$, and
$k_{i}=e_{\alpha_i}-e_{-\alpha_i}\in\mathfrak k$.
\end{thm}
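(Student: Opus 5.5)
We compute the radial part of the Casimir element $\Omega$ directly, using the Iwasawa decomposition to move each term of $\Omega$ either to the left (where it acts through $\sigma$ on $K$-directions) or to the right (where it acts through $d\chi$ on $N$-directions). Fix an orthonormal basis $h_1,\dots,h_r$ of $\mathfrak a$ with respect to the Killing form (or the chosen invariant form), and root vectors $e_{\pm\alpha}$ normalised so that $(e_\alpha,e_{-\alpha})=1$. Then
$$
\Omega=\sum_{j}h_j^2+\sum_{\alpha>0}\bigl(e_\alpha e_{-\alpha}+e_{-\alpha}e_\alpha\bigr).
$$
We use the split-group Cartan involution $\theta(e_\alpha)=-e_{-\alpha}$, so that $k_\alpha=e_\alpha-e_{-\alpha}\in\mathfrak k$, and we let $\Omega$ act on $f$ as a left-invariant operator (right derivatives). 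Since $\Omega$ is central, the resulting operator on $C^\infty((K,\sigma)\backslash G/(N,\chi))$ is what must be restricted to $A$.

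\emph{Step 1: rewrite $\Omega$.} We rewrite each $e_{-\alpha}=e_\alpha-k_\alpha$ and use $e_\alpha e_{-\alpha}=e_{-\alpha}e_\alpha+h_\alpha$ to obtain
$$
\Omega=\sum_j h_j^2+\sum_{\alpha>0}h_\alpha+2\sum_{\alpha>0}e_{-\alpha}e_\alpha .
$$
Substituting gives
$$
e_{-\alpha}e_\alpha=e_\alpha^2-k_\alpha e_\alpha .
$$

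\emph{Step 2: evaluate at $a=e^h$.} The right action of $e_\alpha$ on $f$ gives $-\chi(e_\alpha)f$, which is nonzero only for simple $\alpha$ because $\chi$ is trivial on $[\mathfrak n,\mathfrak n]$. We must move $k_\alpha$ to the left of $a$ via $\Ad(a)$. Writing $a\,k_\alpha\,a^{-1}$ in terms of $e_\alpha$, $e_{-\alpha}$ and recombining into $k_\alpha$ plus $\mathfrak n$-terms, the right action of $k_\alpha$ at $a$ should equal
$$
k_\alpha\cdot f(a)=\frac{1}{\sinh\alpha(h)}\Bigl(-\sigma(k_\alpha)f(a)+\cosh\alpha(h)\,\bigl(\text{action of }e_\alpha\bigr)\Bigr),
$$
or, more cleanly, we use
$$
e_{-\alpha}=e^{\alpha(h)}\Ad(a^{-1})\bigl(e_\alpha-k_\alpha\bigr)\cdot(\ldots).
$$
The efficient route is to avoid $\sinh$ entirely: write $e_{-\alpha}=e_\alpha-k_\alpha$ and note that at the point $a$ we have $R(k_\alpha)=L(\Ad(a)k_\alpha)$. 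Then $\Ad(a)e_{-\alpha}=e^{-\alpha(h)}e_{-\alpha}$, so $e_{-\alpha}$ at $a$ acts as $e^{\alpha(h)}$ times a left derivative. That left derivative splits into a $K$-part, giving $-\sigma(k_\alpha)$, and an $e_\alpha$ left-part. The $e_\alpha$ left-part is converted back to a right derivative with the factor $e^{\alpha(h)}$, which then produces $\chi$. Collecting terms, the $e_\alpha^2$ contributions give $\chi_\alpha^2e^{-2\alpha(h)}$-type terms and the cross term gives $\chi_\alpha\sigma(k_\alpha)e^{-\alpha(h)}$. The signs and exponents depend on conventions, and these will be matched to the stated formula; in particular the claim that only $\chi_i^2$, not a $\sigma^2$ term, appears must come out here.

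\emph{Step 3: the $\mathfrak a$ part.} Apply the gauge transformation $f\mapsto e^{\rho(h)}f$ to absorb the first-order term $\sum h_\alpha=2\rho$, which yields $\Delta$ plus a constant $|\rho|^2$. Then set $H=-\tfrac12\Omega$ up to that constant to obtain the stated form.

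The main obstacle is Step 2: carefully ordering the noncommuting factors $k_\alpha e_\alpha$ so that the $K$-part acts purely on the left and the $N$-part purely on the right, and tracking the $\Ad(a)$ exponential factors and signs. I would first verify the computation on $SL_2(\mathbb R)$, where it is a single-root calculation, and then note that the general case decouples root by root, with only the simple roots surviving because $\chi$ vanishes on non-simple root vectors.
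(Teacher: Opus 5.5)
Your strategy is the paper's. You write $\Omega=\sum_j x_j^2+2t_\rho+2\sum_{\alpha>0}e_{-\alpha}e_\alpha$, split $e_{-\alpha}=e_\alpha-k_\alpha$, transport the $\mathfrak k$-part through $a$ so it acts by $\sigma$ on the left, let $e_\alpha$ act on the right through $\chi$, and gauge by $e^{\rho}$. The paper handles Step~2 with the identity $e_{-\alpha}=a^{-2\alpha}e_\alpha-a^{-\alpha}\Ad_{a^{-1}}(k_\alpha)$, valid in $\mathfrak g$ for each fixed $a$. This gives $e_{-\alpha}e_\alpha=a^{-2\alpha}e_\alpha^2-a^{-\alpha}\Ad_{a^{-1}}(k_\alpha)e_\alpha$, and at the point $a$ the right derivative along $\Ad_{a^{-1}}(k_\alpha)$ is differentiation along $s\mapsto e^{sk_\alpha}a$, which produces $\sigma(k_\alpha)$. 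Your ``efficient route'' is exactly this identity read off at $a$, so the approach is sound. However, Step~2, where all the content of the statement lies, is not actually carried out, and what is written would not yield the stated coefficients.

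Concretely, there are three problems. (i) The $\sinh/\cosh$ expression is the Cartan ($KAK$) radial decomposition from the Calogero--Moser setting and is false here. In the Iwasawa setting one has $k_\alpha=a^{-\alpha}\Ad_{a^{-1}}(k_\alpha)+(1-a^{-2\alpha})e_\alpha$, with no singularities. (ii) The exponents are forced by the conventions of the statement; they cannot be ``matched'' at the end. From $\Ad_a e_{\pm\alpha}=a^{\pm\alpha}e_{\pm\alpha}$ one gets $a\,e^{se_{-\alpha}}=e^{sa^{-\alpha}e_{-\alpha}}a$ and $e^{se_\alpha}a=a\,e^{sa^{-\alpha}e_\alpha}$, so \emph{both} conversion factors are $e^{-\alpha(h)}$. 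Your two factors $e^{\alpha(h)}$ (the first contradicts the sentence just before it) would produce $e^{2\alpha(h)}$ and $e^{\alpha(h)}$. (iii) The ordering ``obstacle'' disappears once you use that $f(gn)=\chi(n)^{-1}f(g)$ for \emph{all} $g\in G$. Hence $R(e_\alpha)f=-\chi_\alpha f$ globally, with $\chi_\alpha=d\chi(e_\alpha)$, and $R(e_{-\alpha}e_\alpha)f=-\chi_\alpha R(e_{-\alpha})f$. Only the first-order computation
\[
R(e_{-\alpha})f(a)=-\bigl(a^{-\alpha}\sigma(k_\alpha)+a^{-2\alpha}\chi_\alpha\bigr)f(a)
\]
is needed, and this is also why no $\sigma^2$ term can appear. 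Note that with your constant-coefficient split from Step~1, $R(e_\alpha^2)$ contributes $\chi_\alpha^2$ with no exponential. The factor $a^{-2\alpha}$ appears only after combining it with the $\mathfrak n$-component of $R(k_\alpha)$ at $a$. Altogether this yields $(R(\Omega)f)|_A=\bigl(\Delta+2\partial_{t_\rho}+2\sum_i\chi_i^2a^{-2\alpha_i}+2\sum_i\chi_ia^{-\alpha_i}\sigma(k_i)\bigr)f|_A$. Since $e^{\rho}(\Delta+2\partial_{t_\rho})e^{-\rho}=\Delta-(\rho,\rho)$ (minus, not plus), $H=-\tfrac12\bigl(e^{\rho}R(\Omega)e^{-\rho}+(\rho,\rho)\bigr)$ is the stated operator.
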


When $\sigma$ is trivial, this reduces to the ordinary open quantum Toda lattice. This provides a uniform framework of many existing constructions of various Toda type systems. An example recovering $BC$-Toda system is explained in section 3. 

\begin{rem}
By construction, the wave functions of our spin Toda system are spherical-Whittaker functions. The analytic behaviour of such functions has been studied by Harish-Chandra\cite{har} in great detail. For more recent work, see van den Ban\cite{ban}, where, in particular, a radial part formula for general real reductive Lie groups is given.
\end{rem}

We also introduce an N-point version of spherical Whittaker function. Following the vertex-operator construction for N-point spherical functions, we compose differential intertwiners between principal series representations to construct the analogue for spherical Whittaker functions.  An appropriate factorisation of the quadratic Casimir element then produces a system of first-order asymptotic
boundary KZB type equations.
\begin{thm}
Let $U_1,...,U_N$ be finite dimensional $G$-modules and let $\mathcal{H}_{\lambda_0},...,\mathcal{H}_{\lambda_N}$ be principal series representations. Given differential vertex operators $D_i: \mathcal{H}_{\lambda_i}^\infty\to\mathcal{H}_{\lambda_{i-1}}^\infty\otimes U_i,\ 1\leq i\leq N$ and continuous intertwiners $\phi_l\in\text{Hom}_K(\mathcal{H}_{\lambda_0},V),\ \phi_r\in\text{Hom}_{N}(\mathbbm{C}_\chi,\mathcal{H}_{\lambda_N}^{-\infty})$, the restriction of the N-point spherical-Whittaker function to the maximal torus
$$
f(g):=(\phi_l\otimes id_{U_1\otimes...\otimes U_N})(D_1\otimes id_{U_2\otimes...\otimes U_N})...(D_{N-1}\otimes id_{U_N})D_N(\pi_{\lambda_N}(g)\phi_r)
$$
satisfies first-order equations
$$
\tilde{D}_i \left.f\right|_A=\left(\frac{(\lambda_i,\lambda_i)}{2}-\frac{(\lambda_{i-1},\lambda_{i-1})}{2}\right)\left.f\right|_A
$$
\end{thm}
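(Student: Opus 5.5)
The plan is to adapt the Etingof--Kirillov style derivation of trigonometric/asymptotic KZB equations for $N$-point spherical functions. The first-order operators $\tilde D_i$ come from a factorisation of the quadratic Casimir $\Omega=\sum_a x_a x^a$. Write $\Omega = \Omega_{\mathfrak a} + \sum_{\alpha>0}(e_\alpha e_{-\alpha}+e_{-\alpha}e_\alpha)/(e_\alpha,e_{-\alpha})$ with respect to the Iwasawa-adapted basis. Rewrite each $e_{-\alpha}=e_\alpha-k_\alpha$ with $k_\alpha=e_\alpha-e_{-\alpha}\in\mathfrak k$. This expresses $\Omega$ modulo $\mathfrak k U(\mathfrak g)+U(\mathfrak g)\mathfrak n$ as a radial part in $U(\mathfrak a)$ plus terms $k_\alpha\otimes e_\alpha$. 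This is the same computation that produces the spin Toda Hamiltonian in the first theorem.

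The key structural input is the intertwining property of the vertex operators. Because $D_i$ is a $G$-morphism $\mathcal H_{\lambda_i}^\infty\to\mathcal H_{\lambda_{i-1}}^\infty\otimes U_i$, we have
\[
(\Delta\Omega)\, D_i = D_i\,\Omega .
\]
The Casimir acts on $\mathcal H_\lambda$ by the scalar $\frac{(\lambda,\lambda)}{2}$ minus a $\rho$-shift, depending on normalisation, and this shift cancels in differences. Expanding $\Delta\Omega=\Omega\otimes1+1\otimes\Omega+2\,\Omega_{12}$, with $\Omega_{12}=\sum_a x_a\otimes x^a$, gives
\[
D_i\,\Omega_{\lambda_i} - \Omega_{\lambda_{i-1}}D_i = 2(\Omega_{12})D_i + \Omega_{U_i}D_i .
\]
Inserting this into the chain defining $f(g)$ at position $i$, the difference of eigenvalues $\frac{(\lambda_i,\lambda_i)}2-\frac{(\lambda_{i-1},\lambda_{i-1})}2$ is expressed through two contributions. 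The first is the mixed Casimir $\sum_a x_a^{(0\dots i-1)}\otimes x^{a\,(i)}$, where $x_a^{(0\dots i-1)}$ is the diagonal action on $\mathcal H_{\lambda_0}\otimes U_1\otimes\cdots\otimes U_{i-1}$. The second is the Casimir of $U_i$. Iterating $D_j$-intertwining moves the diagonal action of $x_a$ on the first factors all the way to $\mathcal H_{\lambda_0}$, at the cost of terms acting on $U_1,\dots,U_{i-1}$.

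Next I restrict to $g=a\in A$ and use the equivariance of the endpoints. On the left, $\phi_l$ is $K$-equivariant, so $k_\alpha$ acting on $\mathcal H_{\lambda_0}$ becomes $\sigma(k_\alpha)$ on $V$, with a sign. On the right, $\phi_r$ is an $N$-eigenvector, so $e_\alpha$ acting through $\pi_{\lambda_N}(a)\phi_r$ becomes $e^{-\alpha(h)}d\chi(e_\alpha)$. It vanishes unless $\alpha$ is simple. The Cartan part acting on the left gives derivatives $\partial_{h}$ in $A$-coordinates, via $\pi(a)$ and $\frac{d}{dt}f(a e^{tH})$. Collecting terms yields an operator of the form
\[
\tilde D_i=\sum_j h_j^{(i)}\partial_{h_j}+\sum_{\alpha>0}\frac{\text{(expressions in }e_{\pm\alpha},k_\alpha\text{ on }U_1,\dots,U_i,V)}{\,}+\text{Toda-type terms }\chi_j e^{-\alpha_j}(\cdot)+\tfrac12\Omega_{U_i}.
\]
These are first order in $h$, giving asymptotic boundary KZB type equations with exponentials $e^{-\alpha}$ rather than trigonometric $r$-matrices. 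This is consistent with the Iwasawa, rather than Cartan, decomposition.

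The main obstacle is the bookkeeping when passing $e_{-\alpha}$ across $\pi(a)$ and the chain. Writing $e_{-\alpha}=e_\alpha-k_\alpha$, the $e_\alpha$ piece must move right through $D_{i+1},\dots,D_N$ and $\pi(a)$, using $\Ad(a^{-1})e_\alpha=e^{-\alpha(h)}e_\alpha$, until it hits $\phi_r$. Each crossing of $D_j$ creates an extra action on $U_j$. The $k_\alpha$ piece must similarly move left to $\phi_l$. I would first do this carefully for $N=1$, where it must reproduce the first theorem's Hamiltonian restricted to first-order form. I would then induct on the position $i$, checking that all non-simple-root contributions at $\phi_r$ vanish by the character property.
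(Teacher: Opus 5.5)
Your plan is correct and is essentially the paper's argument. The Casimir is factorised so that the $\mathfrak k$-leg $\sum_{\alpha}k_\alpha\otimes e_\alpha$ is pushed left to $\phi_l$, producing $\tau^l_{VU_i}$ and $\tau^l_{U_jU_i}$ for $j<i$. The $\mathfrak h\oplus\mathfrak n$-legs are pushed right through $D_i,\dots,D_N$ and $\pi_{\lambda_N}(a)$ via $X\pi(a)=\pi(a)\Ad_{a^{-1}}(X)$, producing $d_{U_i}$, $\tau^r_{U_iU_j}$ for $j>i$, the derivatives $\partial_{x_j}$, and $\tilde\tau_{U_i,\chi^*}$. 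Keep your middle paragraph consistent with your last one on this: only the $\mathfrak k$-components, not the full diagonal action of every $x_a$, may be moved to $\mathcal H_{\lambda_0}$.

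The only difference is cosmetic. You derive the factorisation from $\Omega\otimes1+1\otimes\Omega+2\Omega_{12}$, substituting $f_\alpha=e_\alpha-k_\alpha$ in the first leg. The paper obtains the same triple $(\tau^l,\tau^r,d)$ by inserting the Iwasawa radial component of $\Omega$ into $(\Ad_a(\Omega)\otimes1)\Phi-\Phi\Ad_a(\Omega)$ and using centrality. Also, no induction on $i$ is needed, since each $i$ is handled by the same direct computation.
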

Here the boundary KZB type operator $\tilde{D}_i$ is given by
$$
\tilde{D}_i=\sum_{j=1}^r(x_j)_{U_i}\partial_{x_j}-\tau^l_{VU_i}-\sum_{j=1}^{i-1}\tau^l_{U_jU_i}-d_{U_i}-\sum_{j=i+1}^N\tau^r_{U_iU_j}-\tilde{\tau}_{U_i,\chi^*}
$$
where
$$
\tau^l=\sum_{\alpha\in R^+}(k_\alpha\otimes e_\alpha)
$$
$$
\tau^r=\sum_{j=1}^n(x_j\otimes x_j)+2\sum_{\alpha\in R^+}e_\alpha\otimes e_\alpha-\sum_{\alpha\in R^+}k_\alpha\otimes e_\alpha
$$
$$
\tilde{\tau}=\sum_{\alpha\in R^+}a^{-\alpha}(e_\alpha+f_\alpha)\otimes e_\alpha
$$
$$
d=\frac{1}{2}\sum_{j=1}^nx_j^2-t_\rho+\sum_{\alpha\in R^+}e_{\alpha}^2
$$

We have chosen $R^+$ a set of positive roots. $e_\alpha\in\mathfrak{g}_{0,\alpha}$ and $f_\alpha\in\mathfrak{g}_{0,-\alpha}$ are normalized root vectors. $k_\alpha=e_\alpha-f_\alpha$ form a basis of $\mathfrak{k}_{0}$, and $\{x_j\}$ is an orthonormal basis of $\mathfrak{h}_0$.

\begin{rem}
The term ``KZB-type operator'' comes from the spin Calogero-Moser setting, where these operators have a natural quantum field theoretic interpretation, as the equation satisfied by the limits of correlation functions of the boundary Wess-Zumino-Witten model. It would be interesting to determine whether the spin Toda system arises this way in two- dimensional conformal field theory.
\end{rem}

Physically, N-point spherical Whittaker functions are the wave functions of spin Toda chains. In this paper we also construct the classical mechanical counterpart of this system and establish its superintegrability. This approach parallels to the construction of open and closed spin Calogero-Moser chains.

In section 2 we fix the Lie-theoretic conventions used throughout the paper. In section 3 we compute the radial part of the Casimir element acting on spherical Whittaker functions and obtain the spin Toda Hamiltonian. In section 4 we formalize the Iwasawa radial compoent computation, construct N-point spherical Whittaker functions and prove that they satisfy asymptotic boundary KZB type equations. Section 5 is concerned with the classical counterpart.

\section*{Acknowledgements}

The author is grateful to Professor Nicolai Reshetikhin for his kind support and for introducing him to the fields of mathematical physics and integrable systems. He also thanks Professor Jasper Stokman for pointing out the work of Harish-Chandra and for many other valuable suggestions and comments.

\section{Notations}
Throughout this paper all vector spaces, tensor products, and representations are understood over $\mathbbm{C}$, unless explicitly stated otherwise. Real Lie algebras are denoted by a subscript zero, and the complexification of a real Lie algebra $\mathfrak l_0$ is denoted by
$$
\mathfrak l=\mathfrak l_0\otimes_{\mathbbm{R}}\mathbbm{C}.
$$

Let $G$ be a connected real Lie group with finite centre whose Lie algebra $\mathfrak g_0$ is split and simple. Fix a Cartan involution $\theta_0\in\operatorname{Aut}(\mathfrak g_0)$ and write
$$
\mathfrak g_0=\mathfrak k_0\oplus\mathfrak p_0
$$
for the associated Cartan decomposition into the $+1$ and $-1$ eigenspaces of $\theta_0$, respectively. Let
$$
\mathfrak g=\mathfrak g_0\otimes_{\mathbbm{R}}\mathbbm{C},\qquad\mathfrak k=\mathfrak k_0\otimes_{\mathbbm{R}}\mathbbm{C},\qquad\mathfrak p=\mathfrak p_0\otimes_{\mathbbm{R}}\mathbbm{C},
$$
and denote the complex-linear extension of $\theta_0$ to $\mathfrak g$ by $\theta$. Thus $\mathfrak g=\mathfrak k\oplus\mathfrak p$. Let $K\subseteq G$ be the maximal compact subgroup with Lie algebra $\mathfrak k_0$.

Fix a maximal abelian subspace $\mathfrak h_0\subseteq\mathfrak p_0$. Since $\mathfrak g_0$ is split, $\mathfrak h_0$ is a split Cartan subalgebra of $\mathfrak g_0$, and its complexification
$$
\mathfrak h=\mathfrak h_0\otimes_{\mathbbm{R}}\mathbbm{C}
$$
is a Cartan subalgebra of $\mathfrak g$. Let
\begin{equation}\label{eq:root-decomposition}
\mathfrak g=\mathfrak h\oplus\bigoplus_{\alpha\in R}\mathfrak g_\alpha,\qquad\mathfrak g_\alpha=\{x\in\mathfrak g:[h,x]=\alpha(h)x\text{ for all }h\in\mathfrak h\}.
\end{equation}
Here $R=R(\mathfrak g,\mathfrak h)\subseteq\mathfrak h^*$ is the root system. Since $\mathfrak g_0$ is split, it agrees with the restricted root system of $(\mathfrak g_0,\mathfrak h_0)$, and $\mathfrak g_{0,\alpha}:=\mathfrak g_0\cap\mathfrak g_\alpha$ is one-dimensional over $\mathbbm{R}$ for every $\alpha\in R$.

Fix a positive system $R_+\subseteq R$, and denote its simple roots by
$$
\alpha_1,\ldots,\alpha_r,\qquad r=\dim\mathfrak h_0.
$$
We write $Q=\mathbbm{Z} R$ for the root lattice and
$$
W=N_K(\mathfrak h_0)/Z_K(\mathfrak h_0)
$$
for the Weyl group.

Let $\langle\cdot,\cdot\rangle_{\mathfrak g_0}$ be the Killing form of $\mathfrak g_0$, and use the same notation for its complex-bilinear extension to $\mathfrak g$. Its restriction to $\mathfrak h_0$ is positive definite. For $\lambda\in\mathfrak h^*$, let $t_\lambda\in\mathfrak h$ be the unique element determined by
$$
\langle h,t_\lambda\rangle_{\mathfrak g}=\lambda(h),\qquad h\in\mathfrak h.
$$
We use the induced bilinear form on $\mathfrak h^*$,
$$
(\lambda,\mu):=\langle t_\lambda,t_\mu\rangle_{\mathfrak g}.
$$
In particular, $t_\alpha\in\mathfrak h_0$ for every $\alpha\in R$.

Choose root vectors $e_\alpha\in\mathfrak g_{0,\alpha}$, $\alpha\in R$, normalised by
$$
[e_\alpha,e_{-\alpha}]=t_\alpha,\qquad\theta_0(e_\alpha)=-e_{-\alpha}.
$$
Equivalently, $\langle e_\alpha,e_{-\alpha}\rangle_{\mathfrak g_0}=1$. For $\alpha\in R_+$, set
$$
f_\alpha:=e_{-\alpha},\qquad k_\alpha:=e_\alpha-f_\alpha.
$$
Then
$$
[e_\alpha,f_\alpha]=t_\alpha,\qquad k_\alpha\in\mathfrak k_0,
$$
and
$$
\mathfrak k_0=\bigoplus_{\alpha\in R_+}\mathbbm{R} k_\alpha,\qquad\mathfrak p_0=\mathfrak h_0\oplus\bigoplus_{\alpha\in R_+}\mathbbm{R}(e_\alpha+f_\alpha).
$$

Set
$$
\rho:=\frac12\sum_{\alpha\in R_+}\alpha\in\mathfrak h^*,\qquad t_\rho=\frac12\sum_{\alpha\in R_+}t_\alpha\in\mathfrak h.
$$
Thus $\rho$ is a linear functional, whereas $t_\rho$ is an element of $\mathfrak h$. We write $\rho(h)$, equivalently $\langle h,t_\rho\rangle_{\mathfrak g}$, for their pairing.

Let
$$
A:=\exp(\mathfrak h_0),\qquad\mathfrak n_{0,+}:=\bigoplus_{\alpha\in R_+}\mathfrak g_{0,\alpha},\qquad\mathfrak n:=\mathfrak n_{0,+}\otimes_{\mathbbm{R}}\mathbbm{C}.
$$
Let $N\subseteq G$ be the connected subgroup with Lie algebra $\mathfrak n_{0,+}$. The multiplication map
$$
K\times A\times N\longrightarrow G,\qquad(k,a,n)\longmapsto kan,
$$
is a diffeomorphism. This is the global Iwasawa decomposition. We also write
$$
M:=Z_K(\mathfrak h_0),\qquad P:=MAN
$$
for the centraliser of $\mathfrak h_0$ in $K$ and the corresponding minimal parabolic subgroup.

For $\lambda\in\mathfrak h^*$, define
$$
\xi_\lambda(a)=a^\lambda:=e^{\lambda(\log a)},\qquad a\in A.
$$
Thus, if $a=e^h$ with $h\in\mathfrak h_0$, then
$$
a^\lambda=e^{\lambda(h)}.
$$
Our adjoint-action convention is
$$
\Ad_a(e_\alpha)=a^\alpha e_\alpha,\qquad\Ad_{a^{-1}}(e_\alpha)=a^{-\alpha}e_\alpha.
$$

Choose an orthonormal basis $x_1,\ldots,x_r$ of $\mathfrak h_0$ with respect to the Killing form. For $x\in\mathfrak h_0$, define
$$
(\partial_xu)(a):=\left.\frac{d}{dt}\right|_{t=0}u(a\exp(tx)),\qquad u\in C^\infty(A).
$$
We write
$$
\Delta=\sum_{j=1}^r\partial_{x_j}^2,\qquad\partial_{t_\rho}=\sum_{j=1}^r\rho(x_j)\partial_{x_j}.
$$

We write $U(\mathfrak g)$ and $S(\mathfrak g)$ for the universal enveloping and symmetric algebras of $\mathfrak g$, respectively, and $Z(U(\mathfrak g))$ for the centre of $U(\mathfrak g)$. The antipode of $U(\mathfrak g)$ is denoted by $S$ and is determined by $S(x)=-x$ for $x\in\mathfrak g$.

The famous Harish-Chandra and Chevalley isomorphisms give
$$
\Diff(G)^{G\times G}\simeq Z(U(\mathfrak g))\simeq S(\mathfrak h)^W,\qquad S(\mathfrak g)^G\simeq S(\mathfrak h)^W.
$$

For $g\in G$, define the left and right regular actions on smooth
functions by
$$
(l(g)F)(g')=F(g^{-1}g')
$$
$$
(r(g)F)(g')=F(g'g).
$$
Their infinitesimal actions are denoted by $L$ and $R$:
$$
(L(x)F)(g)=\left.\frac{d}{dt}\right|_{t=0}F(\exp(-tx)g),
$$
$$
(R(x)F)(g)=\left.\frac{d}{dt}\right|_{t=0}F(g\exp(tx)).
$$
For any invariant differential operator $P\in Z(U(\mathfrak{g}))$, since it bi-invariant, we have
$$
L(P)=R(P).
$$

We also use the tensor-leg notation for tensor products, for example, if
$$
T=\sum_s a_s\otimes b_s\in\mathfrak{g}\otimes\mathfrak{g}
$$
and $X,Y$ are $\mathfrak{g}$-modules, we write
$$
T_{XY}:=\sum_s(a_s)_X(b_s)_Y
$$
for the operator acting on the corresponding factors.

\section{Construction of the spin Toda operator}

Recall that the theorems of Harish-Chandra and Chevalley give
$$
\mathcal{D}(G)^{G\times G}\cong Z(U(\mathfrak{g}))\cong S\mathfrak{g}^G\cong S\mathfrak{h}^W\cong \mathbbm{C}[x_1,...,x_r]
$$
which allows us to carry out explicit computations.

Here $r=rank(G)$ and we may choose $x_1$ to correspond to the Casimir element
$$
\Omega=\sum x_i^2+2\sum_{\alpha>0}f_\alpha e_\alpha+2t_\rho
$$

The action of $\Omega$ on spherical functions
$$
C^\infty(K\backslash G/ K)\subset C^\infty(G)
$$
gives the quantum Calogero-Moser system.

If we instead consider
$$
C^\infty((N_+,\chi_+)\backslash G/(N_-,\chi_-))\subset C^\infty (G)
$$
then the action of the Casimir element gives rise to the quantum Toda system
$$
L(\Omega)=R(\Omega)=\sum\partial_i^2-2\partial_\rho+2\sum_{i=1}^r\chi_+^i\chi_-^i e^{\alpha_i(h)}
$$

We now consider a mixture of the two cpnstructions above. Namely, we consider the space
$$
C^\infty((K,\sigma)\backslash G/(N,\chi))\subset C^\infty(G)\otimes V
$$
here $(\sigma,V)$ is a finite dimensional representation of $K$. The action of $Z(U(\mathfrak{g}))$ on it will still give a superintegrable system, which we now compute explicitly.

Namely,
$$
\begin{aligned}
&C^\infty((K,\sigma)\backslash G/(N,\chi))\\
=&\{f\in C^{\infty}(G)\otimes V: f(kan)=\frac{\sigma(k)}{\chi(n)}f(a),\ \forall k\in K, a\in A, n\in N\}
\end{aligned}
$$

The Lie algebra $\mathfrak{k}=\text{Lie}(K)$ is generated by $k_\alpha=e_{\alpha}-f_{\alpha}$, and therefore
$$
\Omega=\sum x_i^2+2\sum_{\alpha>0}e_\alpha^2-2\sum_{\alpha>0}k_\alpha e_\alpha+2t_\rho
$$

For $\phi\in C^\infty((K,\sigma)\backslash G/(N,\chi))$, by definition we have:
$$
\begin{aligned}
(L(k_\alpha)\phi)(e^h)&=\left.\frac{\text{d}}{\text{d}t}\right|_{t=0}\phi(\text{exp}(-tk_{\alpha})e^h)\\
&=-\sigma(k_{\alpha})\phi(e^h)
\end{aligned}
$$
$$
\begin{aligned}
(L(e_\alpha)\phi)(e^h)&=\left.\frac{\text{d}}{\text{d}t}\right|_{t=0}\phi(\text{exp}(-te_{\alpha})e^h)\\
&=\left.\frac{\text{d}}{\text{d}t}\right|_{t=0}\phi(e^h\cdot e^{-h}\text{exp}(-te_{\alpha})e^h)\\
&=\left.\frac{\text{d}}{\text{d}t}\right|_{t=0}\phi(e^h\cdot \text{exp}(-te^{-\alpha(h)}e_{\alpha}))\\
&=e^{-\alpha(h)}\chi(e_{\alpha})\phi(e^h)
\end{aligned}
$$

Similarly, we can explicitly compute the action of $R(k_\alpha)$ and $R(e_\alpha)$, hence
$$
L(\Omega)=R(\Omega)=\sum_{i=1}^r\partial_i^2+2\partial_\rho+2\sum_{i=1}^r\chi_i^2e^{-2\alpha_i(h)}+2\sum_{i=1}^r\chi_i\sigma(k_i)e^{-\alpha_i(h)}
$$
here $\chi_i=\chi(e_{\alpha_i})$. This expression only consists of simple roots because $\chi$ is a character and
$$
\mathfrak{n}/[\mathfrak{n},\mathfrak{n}]\cong\bigoplus_{i=1}^r\mathbbm{C}e_{\alpha_i}
$$
Conjugating by $e^{\rho(h)}$, we get
$$
\begin{aligned}
D&=e^{\rho(h)}L(\Omega)e^{-\rho(h)}\\
&=\sum_{i=1}^r\partial_i^2+2\sum_{i=1}^r\chi_i^2e^{-2\alpha_i(h)}+2\sum_{i=1}^r\chi_i\sigma(k_i)e^{-\alpha_i(h)}-(\rho,\rho)
\end{aligned}
$$

Therefore the operator
$$
\begin{aligned}
H&=-\frac{1}{2}(D+(\rho,\rho))\\
&=-\frac{1}{2}\sum_{i=1}^r\partial_i^2-\sum_{i=1}^r\chi_i^2e^{-2\alpha_i(h)}-\sum_{i=1}^r\chi_i\sigma(k_i)e^{-\alpha_i(h)}
\end{aligned}
$$
is a standard second order differential operator, with spectrum bounded below if $\chi$ is purely imaginary and $\sigma$ is unitary, hence it defines a quantum Hamiltonian. If the representation $\sigma$ is trivial, we recover the usual quantum Toda lattice. Therefore the Hamiltonian system we get is a spin version of the Toda lattice.

\begin{exmp}
We first consider the simplest non-trivial example, in which $\sigma$ is a non-trivial one-dimensional representation. By \cite{sgo}, this happens only when $\mathfrak{g}$ is of type C.

Let $\mathfrak{g}=\mathfrak{sp}_{2r}$. Then
$$
\mathfrak{h}=\bigoplus_{i=1}^{r}\mathbbm{C}(E_{i,i}-E_{i+r,i+r})
$$

The root system is
$$
R=\{\pm(\epsilon_k\pm\epsilon_l)\}_{k\neq l}\cup\{\pm 2\epsilon_j\}
$$
with simple roots
$$
\alpha_k=\epsilon_k-\epsilon_{k+1},\ \ 1\leq k<r,\ \ \alpha_r=2\epsilon_r
$$
here
$$
\epsilon_{j}(E_{i,i}-E_{i+r,i+r})=\delta_{ij}
$$

Recall that $\mathfrak{k}\cong\mathfrak{gl}_r$ and that the space of characters $\text{ch}(\mathfrak{k})$ is one-dimensional, generated by the character $\chi_{\mathfrak{sp}}$ determined by
$$
\chi_{\mathfrak{sp}}(k_i)=0,\ 1\leq i<r,\ \ \  \chi_{\mathfrak{sp}}(k_r)=1
$$
therefore, set $\sigma=d\chi_{\mathfrak{sp}}$ and
$$
\chi_i=\sqrt{-1},\ 1\leq i<r,\ \ \chi_r=c
$$
we get
$$
H=-\frac{1}{2}\Delta+\sum_{i=1}^{r-1}e^{2(x_i-x_{i+1})}-c^2e^{4x_r}+cde^{2x_r}
$$

Collecting the coefficients and performing the coordinate transformation $x\mapsto 2x$, we obtain
$$
H=-\frac{1}{2}\Delta+e^{x_1-x_2}+...+e^{x_{r-1}-x_r}+Ce^{x_r}+e^{2x_r}
$$

Both $c$ and $d$ should be purely imaginary. This is an integrable system, which can also be constructed by shifting the Lax matrix of the usual open Toda chain by $C\chi_{\mathfrak{sp}}\in\mathfrak{k}^*$\cite{rey}.
\end{exmp}

\section{Factorisation of the Casimir and boundary KZB type operators}

In this section, we first carry out the Iwasawa radial part computation, and then give a factorisation of the Casimir element which is compatible with spherical Whittaker functions. This will give an explicit form of the first order differential equations.

\subsection{The Iwasawa radial-component map}
\ 

Let
$$
R:=\text{span}\{\xi_{\lambda}:\lambda\in Q\}\cong\mathbbm{C}[Q]
$$
and consider the space
$$
V=U(\mathfrak{h})\otimes U(\mathfrak{k})\otimes U(\mathfrak{n})
$$
For any $a\in A$, we define a map
$$
\tilde{\Gamma}_a: R\otimes V\to U(\mathfrak{g})\ \ \ 
f\otimes h\otimes x\otimes y\mapsto f(a)\text{Ad}_{a^{-1}}(x)hy
$$

\begin{thm}
There exists a linear map (the Iwasawa radial component)
$$
\Pi: U(\mathfrak{g})\to R\otimes V
$$
such that
$$
\tilde{\Gamma}_a\circ\Pi=id_{U(\mathfrak{g})},\ \ \ \forall a\in A
$$
\end{thm}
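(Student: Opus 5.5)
The plan is to construct $\Pi$ by induction on the PBW filtration degree, proving a stronger statement about $A$-dependent families of elements of $U(\mathfrak g)$. Consider the $R$-module $R\otimes U(\mathfrak g)$, and evaluate its elements at $a\in A$ by $\mathrm{ev}_a(\sum_j f_j\otimes u_j)=\sum_j f_j(a)u_j$. Extend $\tilde\Gamma_a$ $R$-linearly in the obvious way. The claim I would prove is:
\begin{quote}
for every $F\in R\otimes U(\mathfrak g)_{\le m}$ there exists $\Pi(F)\in R\otimes V$ with $\tilde\Gamma_a(\Pi(F))=\mathrm{ev}_a(F)$ for all $a\in A$.
\end{quote}
Applying this to $F=1\otimes u$ and choosing the preimages linearly on a PBW basis of $U(\mathfrak g)$ gives the theorem.

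The first key input is that $\mathrm{Ad}_{a^{-1}}$ preserves $R\otimes U(\mathfrak g)$. For a weight vector $w\in U(\mathfrak g)$ of $\mathfrak h$-weight $\mu\in Q$ we have $\mathrm{Ad}_{a^{-1}}w=\xi_{-\mu}(a)\,w$. Hence $\mathrm{Ad}_{a^{-1}}(w)=\mathrm{ev}_a(\xi_{-\mu}\otimes w)$, and degree is preserved.

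The second input is the decomposition $\mathfrak g=\mathrm{Ad}_{a^{-1}}(\mathfrak k)\oplus\mathfrak h\oplus\mathfrak n$ with coefficients in $R$, uniformly in $a$. Since $\mathrm{Ad}_{a^{-1}}k_\alpha=a^{-\alpha}e_\alpha-a^{\alpha}f_\alpha$, we get
$$
f_\alpha=a^{-2\alpha}e_\alpha-a^{-\alpha}\,\mathrm{Ad}_{a^{-1}}(k_\alpha).
$$
Thus every $z\in\mathfrak g$ can be written as
$$
z=\sum_\alpha c_\alpha(a)\,\mathrm{Ad}_{a^{-1}}(k_\alpha)+h'+n'(a),
$$
with $c_\alpha\in R$, $h'\in\mathfrak h$, and $n'\in R\otimes\mathfrak n$. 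This settles degree $\le 1$.

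For the inductive step, write a degree-$m$ element as a sum of terms $\xi_\lambda z u'$ with $z\in\mathfrak g$ and $\deg u'\le m-1$. By induction, $u'$ is represented as $\sum f\,\mathrm{Ad}_{a^{-1}}(x)hy$. Decompose $z$ as above and move each piece into its slot.
\begin{enumerate}
\item The piece $\mathrm{Ad}_{a^{-1}}(k_\alpha)\mathrm{Ad}_{a^{-1}}(x)=\mathrm{Ad}_{a^{-1}}(k_\alpha x)$ is already in the required form.
\item For $h'$, write $h'\mathrm{Ad}_{a^{-1}}(x)=\mathrm{Ad}_{a^{-1}}(x)h'+\mathrm{Ad}_{a^{-1}}([h',x])$, using $\mathrm{Ad}_a h'=h'$. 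The commutator term $\mathrm{Ad}_{a^{-1}}([h',x])\,hy$ is an $R$-family of degree $\le m-1$, so it is handled by the inductive hypothesis. The main term $\mathrm{Ad}_{a^{-1}}(x)h'hy$ is of the right shape.
\item For $n'\in\mathfrak n$, write $n'\mathrm{Ad}_{a^{-1}}(x)=\mathrm{Ad}_{a^{-1}}(x)n'+\mathrm{Ad}_{a^{-1}}([\mathrm{Ad}_a n',x])$. The commutator is again of lower degree, with $R$-coefficients since $\mathrm{Ad}_a n'$ only rescales root vectors by $\xi_\alpha$. Then $n'h=hn'+[n',h]$ with $[n',h]\in\mathfrak n$, so the term becomes $\mathrm{Ad}_{a^{-1}}(x)\bigl(hn'+[n',h]\bigr)y$, which lies in $\tilde\Gamma_a(R\otimes V)$.
\end{enumerate}

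The point I expect to need the most care is exactly this bookkeeping. Every commutator produced when reordering must be recognised as an honest element of $R\otimes U(\mathfrak g)$ of strictly lower filtration degree, with coefficients in $R=\mathbbm C[Q]$ that are independent of the particular $a$. This is why the induction has to run over $R$-families rather than over $U(\mathfrak g)$ itself: a commutator such as $[h',\mathrm{Ad}_{a^{-1}}x]$ genuinely depends on $a$.

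Once this is set up, the identity $\tilde\Gamma_a\circ\Pi=\mathrm{id}$ holds for all $a$ simultaneously, because every rewriting step is an identity in $U(\mathfrak g)$ valid for each $a$ with the same $R$-coefficients. Uniqueness is not claimed, so linearity is obtained by defining $\Pi$ on a basis.
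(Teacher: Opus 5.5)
Your proof is correct and uses the same idea as the paper: the identity $f_\alpha=a^{-2\alpha}e_\alpha-a^{-\alpha}\mathrm{Ad}_{a^{-1}}(k_\alpha)$ gives $\mathfrak g=\mathrm{Ad}_{a^{-1}}(\mathfrak k)\oplus\mathfrak h\oplus\mathfrak n$ with coefficients in $R=\mathbbm{C}[Q]$, followed by a PBW straightening, and your induction on filtration degree handles a step the paper's sketch skips (the paper only mentions moving $e_\alpha$ past $\mathfrak h$, not commuting the $\mathfrak n$-pieces past the $\mathrm{Ad}_{a^{-1}}(k_\beta)$ and re-expanding the resulting commutators). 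Two small repairs are needed: first, add the bound $\Pi(F)\in R\otimes V_{\le m}$ (total degree of $h\otimes x\otimes y$ at most $m$) to the inductive hypothesis, since that is what makes terms like $\mathrm{Ad}_{a^{-1}}([h',x])hy$ have degree $\le m-1$, and your construction visibly preserves it; second, for $h\in U(\mathfrak h)$ of degree greater than one, $[n',h]$ lies in $R\otimes U(\mathfrak h)\mathfrak n$ rather than in $\mathfrak n$, which still has the required shape.
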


\begin{proof}
This theorem is essentially a reformulation of the infinitesimal Iwasawa decomposition: for every $a\in A$, we have
$$
\mathfrak{g}=\text{Ad}_{a^{-1}}(\mathfrak{k})\oplus\mathfrak{h}\oplus\mathfrak{n}
$$
In particular, every monomial $f_\alpha\in\mathfrak{n}_{-}$ can be written in terms of $\text{Ad}_{a^{-1}}(k_\alpha), e_\alpha$ and elements in $\mathfrak{h}$ (with coefficients possibly rational functions in $a^\alpha$). By the PBW theorem
$$
U(\mathfrak{g})=U(\mathfrak{n}_-)U(\mathfrak{h})U(\mathfrak{n})
$$
so any element $u\in U(\mathfrak{g})$ can be written uniquely in this form. We express $f_\alpha$ in terms of the infinitesimal decomposition and use the rule
$$
[h,e_\alpha]=\alpha(h)e_\alpha,\ \ \forall h\in\mathfrak{h}
$$
to move all $e_\alpha$ to the right of elements of $\mathfrak{h}$. Replacing each $a^\lambda$ by the function $\xi_\lambda$, we get a map
$$
\Pi: U(\mathfrak{g})\to\mathbbm{C}(Q)\otimes V
$$

Notice that $\text{Ad}_{a^{-1}}(k_\alpha)=a^{-\alpha}e_\alpha-a^\alpha e_{-\alpha}$, so
$$
e_{-\alpha}=a^{-2\alpha}e_\alpha-a^{-\alpha}\text{Ad}_{a^{-1}}(k_\alpha)
$$
involves only coefficients from the polynomial ring $\mathbbm{C}[Q]\subset\mathbbm{C}(Q)$. This implies that $\Pi$ takes values in $R\otimes V$, and it follows directly from the construction that
$$
\tilde{\Gamma}_a\circ\Pi=\text{id}_{U(\mathfrak{g})}
$$
\end{proof}

\begin{rem}
This proof actually extends to more general situations when we enlarge the coefficient ring $R$. For example, a similar constuction for symmetric spaces is performed in \cite{sna} to derive the spin Calogero-Moser Hamiltonians.
\end{rem}

Now consider the Casimir element
$$
\Omega=\sum_{j=1}^nx_j^2+2t_\rho+2\sum_{\alpha\in R^+}e_{-\alpha}e_{\alpha}
$$

Substituting $e_{-\alpha}=a^{-2\alpha}e_\alpha-a^{-\alpha}\text{Ad}_{a^{-1}}(k_\alpha)$ into this expression, we get
$$
\Omega=\sum_{j=1}^nx_j^2+2t_\rho+2\sum_{\alpha\in R^+}a^{-2\alpha}e_\alpha^2-2\sum_{\alpha\in R^+}a^{-\alpha}\text{Ad}_{a^{-1}}(k_\alpha)e_{\alpha}
$$
hence
$$
\begin{aligned}
\Pi(\Omega)&=\sum_{j=1}^n1\otimes x_j^2\otimes1\otimes1+2\cdot1\otimes t_\rho\otimes 1\otimes 1\\
&+2\sum_{\alpha\in R^+}\left(\xi_{-2\alpha}\otimes 1\otimes1\otimes e_\alpha^2-\xi_{-\alpha}\otimes 1\otimes k_\alpha\otimes e_\alpha\right)
\end{aligned}
$$

Note that this actually formalizes the computation of the previous section.

\subsection{Factorisation and vertex operators}

Let $E$, $E'$ and $U$ be $\mathfrak{g}$-modules. By definition, a vertex operator from $E$ to $E'\otimes U$ is an intertwiner
$$
\Phi\in\text{Hom}_{\mathfrak{g}}(E,E'\otimes U)
$$

The vector space $\text{Hom}_{\mathbbm{C}}(E,E'\otimes U)$ carries the following two natural actions of $U(\mathfrak{g})^{\otimes 2}$:
$$
\left((x\otimes y)\Phi\right)(v)=(x_{E'}\otimes y_U)\Phi(v)
$$
$$
\left(\Phi*(x\otimes y)\right)(v)=(1_{E'}\otimes S(x)_U)\Phi(y_E v)
$$
\begin{defn}
A triple $(\tau^l,\tau^r,d)$ with $\tau^l,\tau^r\in\mathfrak{g}\otimes\mathfrak{g}$ and $d\in U(\mathfrak{g})$ gives a factorisation of $\Omega$ if for all $\mathfrak{g}$-modules $E,E',U$ and intertwiners $\Phi\in\text{Hom}_{\mathfrak{g}}(E,E'\otimes U)$ we have
$$
\frac{1}{2}((\Omega\otimes 1)\Phi-\Phi\Omega)=\tau^l\Phi+\Phi*\tau^r+(1\otimes d)\Phi
$$
here $\Omega$ is regarded as an element of $Z(U(\mathfrak{g}))\subset U(\mathfrak{g})$
\end{defn}

\begin{rem}
For simplicity, our convention differs from that of \cite{sna}: our right tensor is $\tau^r=-\tau_{Reshetikhin-Stokman}^r$. This convention is used throughout the rest of the paper.
\end{rem}

We now recall the definitions of several special kind of representations, that enter in the definition of N-point spherical Whittaker functions.
\begin{defn}
Let $\mathfrak{b}=\mathfrak{h}\oplus\mathfrak{n}$ be the Borel subalgebra. For any weight $\mu$, let $\mathbbm{C}_\mu$ be the one-dimensional $\mathfrak{b}$-module on which $\mathfrak{h}$ acts through $\mu$ and $\mathfrak{n}$ acts trivially. Then
$$
M_\mu:=U(\mathfrak{g})\otimes_{U(\mathfrak{b})}\mathbbm{C}_\mu\in\mathfrak{g}-\text{mod}
$$
is the Verma module of highest weight $\mu$
\end{defn}

\begin{defn}
For $\lambda\in\mathfrak{h}^*$, consider the pre-Hilbert space
$$
U_\lambda:=\{F\in C^0(G):F(gman)=a^{-\lambda-\rho}F(g),\ \forall m\in M, a\in A, n\in N\}
$$
with scalar product
$$
\langle F, G\rangle_{\lambda}:=\int_K F(g)\overline{G(g)}\text{d}g
$$
and $G$-action
$$
(\pi_\lambda(g_0)F)(g):=F(g_0^{-1}g)
$$
Then the Hilbert space completion $\mathcal{H}_\lambda:=\widehat{U_\lambda}$ is called the principal series representation.
\end{defn}

Differential vertex operators for principal series representations are closely related to vertex operators for Verma modules, by the following proposition in \cite{sna}:
\begin{prop}
For any finite dimensional $G$-module $U$, there is a natural isomorphism:
$$
\mathcal{D}_G(\mathcal{H}_{\lambda}^\infty,\mathcal{H}_{\mu}^\infty\otimes U)\cong\text{Hom}_{\mathfrak{g}}(M_{-\mu-\rho},M_{-\lambda-\rho}\otimes U)
$$
\end{prop}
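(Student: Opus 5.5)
The plan is to reduce both sides to a statement about $\mathfrak g$-equivariant maps between $K$-finite (or smooth) vectors and formal duals of Verma modules. Both directions of the isomorphism will be written down explicitly, in the spirit of the Reshetikhin–Stokman argument.

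\textbf{Step 1: realise $\mathcal H_\lambda^\infty$ as a space of sections.} We realise $\mathcal H_\lambda^\infty$ as the smooth sections of the line bundle $G\times_P\mathbbm{C}_{-\lambda-\rho}$ over $G/P$. A $G$-equivariant differential operator $D:\mathcal H_\lambda^\infty\to\mathcal H_\mu^\infty\otimes U$ is determined by its germ at the base point $eP$. Concretely, $D$ is determined by the distribution
\[
F\mapsto (DF)(e),
\]
which is supported at $eP$. By the standard identification of distributions supported at a point with jets, such functionals are given by elements of
\[
U(\mathfrak g)\otimes_{U(\mathfrak p)}\mathbbm{C}_{\lambda+\rho},
\]
acting through $R$ (right differentiation), with the appropriate sign and twist conventions.

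\textbf{Step 2: translate $G$-equivariance and the $P$-condition.} $G$-equivariance (commuting with $l(g)$) means $D$ is determined by this functional. The requirement that $DF$ lands in the $\mu$-induced space translates into $P$-equivariance. Namely, the functional, valued in $U$, must transform under $P$ by the character $\mathbbm{C}_{-\mu-\rho}$ tensored with $U$.

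\textbf{Step 3: pass to $\mathfrak b$ and to Verma modules.} Since $G$ is split, $M$ is finite. Passing to $\mathfrak b=\mathfrak h\oplus\mathfrak n$, I would identify the data with a $\mathfrak b$-equivariant element of
\[
\bigl(U(\mathfrak g)\otimes_{U(\mathfrak b)}\mathbbm{C}_{-\lambda-\rho}\bigr)\otimes U=M_{-\lambda-\rho}\otimes U
\]
of weight $-\mu-\rho$ that is killed by $\mathfrak n$. Here $\mathfrak n$ acts diagonally, and I would use $S$ to move the $U$-action across. By Frobenius reciprocity, a singular vector of weight $-\mu-\rho$ in $M_{-\lambda-\rho}\otimes U$ is exactly an element of
\[
\text{Hom}_{\mathfrak g}(M_{-\mu-\rho},M_{-\lambda-\rho}\otimes U).
\]

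\textbf{Step 4: naturality.} Naturality in $U$ is evident from the construction. I would also check that the composite of the two maps is the identity in both directions. This uses that a differential operator is determined by its full jet at one point, and, conversely, that any point-supported $P$-covariant functional extends by left translation to a well-defined $G$-equivariant differential operator.

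\textbf{Main obstacle.} The main obstacle is bookkeeping of the $\rho$-shifts, signs and dualisation, since the convention here has $F(gman)=a^{-\lambda-\rho}F(g)$. This means the base-point functionals naturally form the module induced from $\mathbbm{C}_{\lambda+\rho}$ via right differentiation. Reconciling this with highest weight $-\lambda-\rho$ requires using the antipode, which makes $R(x)$ act as $-x$ on the induced character, together with the order reversal between source and target. I would check the weights on the rank-one case $\mathfrak{sl}_2$ with $U$ trivial first.

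The second subtle point is making rigorous that point-supported $U$-valued distributions on $G/P$ with the twisted line bundle are exactly $U(\mathfrak g)\otimes_{U(\mathfrak p)}(\cdot)$. For this I would invoke Schwartz's theorem on distributions supported at a point, in local coordinates given by $N_-\cdot P$.
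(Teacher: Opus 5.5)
\textbf{Gap in Step 3: the $M$-equivariance is dropped.} The paper gives no argument for this statement; it simply imports it from Reshetikhin--Stokman. Your route is the standard one: a $G$-equivariant operator is determined by the $U$-valued jet functional $F\mapsto(DF)(e)$, which is then identified with a singular vector in a Verma module. Steps 1--2 are essentially fine, and in fact no antipode is needed. Since $R(x)F=-(\lambda+\rho)(x)F$ for $x\in\mathfrak b$, the map $u\mapsto\bigl(F\mapsto(R(u)F)(e)\bigr)$ already factors through $M_{-\lambda-\rho}$. For $w=\sum_i u_i\otimes w_i\in M_{-\lambda-\rho}\otimes U$, the corresponding operator is $(DF)(g)=\pi_U(g)\sum_i(R(u_i)F)(g)\,w_i$. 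Its right $\mathfrak b$-covariance is exactly $y\cdot w=-(\mu+\rho)(y)w$ for the diagonal action.

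The problem is in Step 3, where you write ``since $G$ is split, $M$ is finite'' and pass to $\mathfrak b$. Finiteness of $M$ only gives $\mathfrak p=\mathfrak b$. The group $P=M\ltimes AN$ is disconnected, and the $P$-covariance of Step 2 contains a discrete condition that $\mathfrak b$-equivariance cannot see. Because $F\in\mathcal H_\lambda^\infty$ is right $M$-invariant, $(R(u)F)(gm)=(R(\Ad(m)u)F)(g)$. Hence $(DF)(gm)=(DF)(g)$ for all $F$ if and only if $w$ is fixed by $\Ad(m)\otimes\pi_U(m)$ for every $m\in M$. So what your Steps 1--4 actually prove is
\[
\mathcal D_G(\mathcal H_\lambda^\infty,\mathcal H_\mu^\infty\otimes U)\cong\text{Hom}_{\mathfrak g}(M_{-\mu-\rho},M_{-\lambda-\rho}\otimes U)^M ,
\]
with $M$ acting on Verma modules through $\Ad$ on $U(\mathfrak g)$ and on $U$ through $\pi_U$. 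This $M$-invariance is also exactly what the ``extension by left translation'' in Step 4 needs as input.

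\textbf{The gap cannot be closed for the statement as written.} The $M$-condition genuinely cuts the space down, so the isomorphism fails for general $G$. Take $G=SL_2(\mathbbm R)$, so $M=\{\pm I\}$. Let $U=\mathbbm C^2$ with highest weight vector $u_+$ of weight $\frac12\alpha$, and let $\mu=\lambda-\frac12\alpha$. Then $v_{-\lambda-\rho}\otimes u_+$ is a singular vector of weight $-\mu-\rho$, so the right-hand side is nonzero for every $\lambda$. On the other hand, $-I$ is central in $G$. It acts by $1$ on $\mathcal H_\lambda^\infty$, since right $M$-invariance gives $F(-g)=F(g)$. It acts by $-1$ on $\mathcal H_\mu^\infty\otimes\mathbbm C^2$. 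Hence every $G$-equivariant $D$ satisfies $D=-D$, and the left-hand side is $0$.

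You should therefore conclude with the $M$-invariant version above, i.e.\ $\mathfrak g$-homomorphisms that also commute with $M$. Alternatively, add a hypothesis under which $M$ acts trivially on the relevant singular vectors, e.g.\ $M$ trivial, as for $PSL_2(\mathbbm R)$. With that change your proof goes through.
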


Also, notice that with our normalization of the Casimir element:
$$
\Omega|_{\mathcal{H}_\lambda^\infty}=\left((\lambda,\lambda)-(\rho,\rho)\right)\text{id}_{\mathcal{H}_\lambda^\infty}
$$

We are now ready to define N-point spherical Whittaker functions.

\begin{defn}
Let $U_1,...,U_N$ be finite dimensional $G$-representations, $\lambda_0, and let \lambda_1,...,\lambda_N$ be weights. Suppose we are given differential vertex operators
$$
D_i\in\mathcal{D}_G(\mathcal{H}_{\lambda_i}^\infty,\mathcal{H}_{\lambda_{i-1}}^\infty\otimes U_i)\cong\text{Hom}_{\mathfrak{g}}(M_{-\lambda_{i-1}-\rho},M_{-\lambda_i-\rho}\otimes U_i)
$$

Let $(\sigma,V)$ be a finite dimensional representation of $K$ and $\chi:N\to\mathbbm{C}^\times$ a character, suppose $\phi_l\in\text{Hom}_K(\mathcal{H}_{\lambda_0},V)$ and $\phi_r\in\text{Hom}_{N}(\mathbbm{C}_\chi,\mathcal{H}_{\lambda_N}^{-\infty})$. The N-point spherical-Whittaker function is defined as
$$
f(g):=(\phi_l\otimes id_{U_1\otimes...\otimes U_N})(D_1\otimes id_{U_2\otimes...\otimes U_N})...(D_{N-1}\otimes id_{U_N})D_N(\pi_{\lambda_N}(g)\phi_r),\ \ g\in G
$$
\end{defn}

It's easy to see via the intertwining properties that $e^\rho f(g)$ is actually a wave function of the spin Toda system associated with $(\sigma^{(N)},\chi^*)$ with energy $-\frac{1}{2}(\lambda_N,\lambda_N)$, here the N-fold tensor product representation is
$$
\sigma^{(N)}:=\sigma\otimes\pi_{U_1}\otimes\pi_{U_2}\otimes...\otimes\pi_{U_{N}}
$$

\begin{rem}
Notice that, for simplicity, by our definition the N-point spherical-Whittaker function is actually associated to $\sigma^(N)$ and $\chi^*=\chi^{-1}$, instead of $\chi$.
\end{rem}

We now prove the main theorem.

\begin{thm}
Suppose that
$$
a\mapsto(\tau^l(a),\tau^r(a),d(a))
$$
is a family of triples such that, for every $a\in A$, the triple gives a factorisation of $\Omega$.

Assume that
$$
\tau^l\in\mathfrak{k}\otimes\mathfrak{g}\ \ \ \ \tilde{\tau}\in\mathfrak{g}\otimes\mathfrak{n}
$$
Here $\tilde{\tau}$ is defined by
$$
(1\otimes\text{Ad}_{a^{-1}})\tau^r=\sum_{j=1}^nx_j\otimes x_j+\tilde{\tau}(a)
$$

Then the restriction of $f$ to $A$ satisfies the asymptotic boundary KZB type equations
$$
\tilde{D}_i f=\left(\frac{(\lambda_i,\lambda_i)}{2}-\frac{(\lambda_{i-1},\lambda_{i-1})}{2}\right)f
$$
here the boundary KZB type operator $\tilde{D}_i$ is given by
$$
\tilde{D}_i=\sum_{j=1}^n(x_j)_{U_i}\partial_{x_j}-\tau^l_{VU_i}-\sum_{j=1}^{i-1}\tau^l_{U_jU_i}-d_{U_i}-\sum_{j=i+1}^N\tau^r_{U_iU_j}-\tilde{\tau}_{U_i,\chi^*}
$$
\end{thm}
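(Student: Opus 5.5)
The plan is to apply the factorisation identity to the $i$-th vertex operator inside the composite defining $f$. Then I will push the resulting terms outward to the left boundary $\phi_l$ and to the right boundary $\pi_{\lambda_N}(a)\phi_r$, converting each into an action on the appropriate tensor legs or into a derivative along $A$.

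Write $f(a)=\Phi_{\le i-1}\,(D_i\otimes\mathrm{id})\,\Phi_{>i}(\pi_{\lambda_N}(a)\phi_r)$, where $\Phi_{\le i-1}$ is the composite of $\phi_l$ with $D_1,\dots,D_{i-1}$, and $\Phi_{>i}$ is the composite of $D_{i+1},\dots,D_N$. Under the proposition identifying differential vertex operators with $\mathfrak g$-intertwiners of Verma modules, each $D_i$ is a $\mathfrak g$-intertwiner $\mathcal H^\infty_{\lambda_i}\to\mathcal H^\infty_{\lambda_{i-1}}\otimes U_i$. The factorisation hypothesis at the fixed $a$ then gives
$$
\tfrac12\big((\Omega\otimes1)D_i-D_i\Omega\big)=\tau^l(a)D_i+D_i*\tau^r(a)+(1\otimes d(a))D_i .
$$
The left side equals $\big(\tfrac{(\lambda_{i-1},\lambda_{i-1})}{2}-\tfrac{(\lambda_i,\lambda_i)}{2}\big)D_i$ by the scalar action of $\Omega$ on $\mathcal H^\infty_\lambda$; the $(\rho,\rho)$ terms cancel. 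This produces the eigenvalue with the sign appearing in $\tilde D_i$. The $d$-term immediately gives $-d_{U_i}$.

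For the $\tau^l$ term, write $\tau^l=\sum_s a_s\otimes b_s$ with $a_s\in\mathfrak k$. Here $(a_s)_{\mathcal H_{\lambda_{i-1}}}$ acts on the output of $D_i$, and I will move it leftward through $D_{i-1},\dots,D_1$ using their intertwining property. Each passage through $D_j$ turns $x$ into $\Delta(x)$, that is, $x_{\mathcal H_{\lambda_{j-1}}}+x_{U_j}$. This yields the terms $\tau^l_{U_jU_i}$ for $j<i$ and finally $a_s$ acting on $\mathcal H_{\lambda_0}$. Since $a_s\in\mathfrak k$ and $\phi_l$ is $K$-equivariant, that last action becomes $\sigma(a_s)$ on $V$, giving $\tau^l_{VU_i}$. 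This is where the hypothesis $\tau^l\in\mathfrak k\otimes\mathfrak g$ is used.

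For the $D_i*\tau^r$ term, write $\tau^r=\sum_t c_t\otimes d_t$, so $D_i*\tau^r$ means $S(c_t)_{U_i}$ applied after $D_i$, with $d_t$ acting on $\mathcal H^\infty_{\lambda_i}$ first. I will move $d_t$ rightward through $D_{i+1},\dots,D_N$ by intertwining. Since $\Delta$ is primitive, each passage splits off a term $(d_t)_{U_j}$ with a sign coming from the antipode, and these assemble into $\tau^r_{U_iU_j}$ for $j>i$. The sign convention $\tau^r=-\tau^r_{RS}$ is chosen precisely so that these terms appear with the stated sign, and I will check that bookkeeping carefully. What remains is $d_t$ acting on $\pi_{\lambda_N}(a)\phi_r$. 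Writing $d_t\pi(a)=\pi(a)\,\mathrm{Ad}_{a^{-1}}(d_t)$, I use the defining decomposition $(1\otimes\mathrm{Ad}_{a^{-1}})\tau^r=\sum_j x_j\otimes x_j+\tilde\tau(a)$. The $\tilde\tau$ part has second leg in $\mathfrak n$, and $\phi_r$ is an $N$-eigenvector with character $\chi$, so this part contributes the scalar $\chi$, or $\chi^*$ after the antipode sign, giving $\tilde\tau_{U_i,\chi^*}$. The $x_j\otimes x_j$ part gives $\pi(a)x_j\phi_r=\tfrac{d}{dt}\pi(a e^{tx_j})\phi_r$, which is $\partial_{x_j}$ applied to the whole composite, paired with $(x_j)_{U_i}$. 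Collecting all terms gives the stated equation.

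The main obstacle is the simultaneous sign and antipode bookkeeping: the $*$-action uses $S$ on the $U$-leg, the left action uses $L(x)$ with $\exp(-tx)$ while $\pi_\lambda(a)$ gives a right derivative, and the variable $a$ enters both through the $a$-dependence of $\tau^r(a)$ and through $\partial_{x_j}$. One must confirm that $\partial_{x_j}$ does not hit the $a$-dependence of the coefficients. It does not, because the factorisation identity is applied pointwise in $a$ with $D_i$ independent of $a$, and only the vector $\pi(a)\phi_r$ varies. A further point needing justification is that moving operators through the $D_j$ and pairing with the distribution vector $\phi_r$ is legitimate on $\mathcal H^{-\infty}$. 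This follows because the $D_j$ are differential, hence extend continuously to distribution vectors.
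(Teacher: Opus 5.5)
Your proposal is correct and follows essentially the same route as the paper's sketched proof: apply the factorisation identity to $D_i$, push the $\mathfrak k$-leg of $\tau^l$ leftward through $D_{i-1},\dots,D_1$ to $\sigma$ using the $K$-equivariance of $\phi_l$, and push the second leg of $\tau^r$ rightward through $D_{i+1},\dots,D_N$; at the Whittaker boundary, $X\pi_{\lambda_N}(a)=\pi_{\lambda_N}(a)\Ad_{a^{-1}}(X)$ splits off the Cartan part as $\sum_j(x_j)_{U_i}\partial_{x_j}$ and the $\tilde\tau$ part as $\tilde\tau_{U_i,\chi^*}$. The sign bookkeeping you flag (the antipode on the $U_i$-leg, the minus sign from moving past each $D_j$, and $d\chi^*=-d\chi$) does work out to exactly the stated equation, which is more detail than the paper itself gives.
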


\begin{proof}
The proof follows directly from \cite{sfs,sna}, here we only sketch it. Applying the left action and the right $*$-action of $(\Omega\otimes 1)$ to the $i$-th factor of the N-point function, we get the eigenvalue
$$
\frac{1}{2}\left(((\lambda_{i-1},\lambda_{i-1})-(\rho,\rho))-((\lambda_{i},\lambda_{i})-(\rho,\rho))\right)=\frac{1}{2}\left((\lambda_{i-1},\lambda_{i-1})-(\lambda_i,\lambda_i)\right)
$$.
Using the intertwining properties, we can move the action of $\tau^l$ to $U_1\otimes...\otimes U_{i-1}$ and to the spherical boundary $\sigma$, and the action of $\tau^r$ to $U_{i+1}\otimes...\otimes  U_N$.

To reach the Whittaker boundary $\mathbbm{C}_\chi$, recall that
$$
X\pi_{\lambda_N}(a)=\pi_{\lambda_N}(a)\text{Ad}_{a^{-1}}(X),\ \ \forall X\in U(\mathfrak{g})
$$
Therefore the action of $\tau^r$ is intertwined into the action of $\tilde{\tau}$ on $\mathbbm{C}_\chi$, while the remaining Cartan term gives the first- order differential operator
$$
E=\sum_{j=1}^n\partial_{x_j}\otimes x_j
$$

The theorem is proved by reordering the terms.
\end{proof}

The key result of this section is an explicit factorisation.

Since $\Omega$ is central, we have:
$$
(\Omega\otimes 1)\Phi-\Phi\Omega=(\text{Ad}_a(\Omega)\otimes 1)\Phi-\Phi\text{Ad}_a(\Omega),\ \ \forall a\in A
$$
Substituting the radial part computation into the right-hand side:

$$
\begin{aligned}
(\Omega\otimes 1)\Phi-\Phi\Omega
=&\sum_{j=1}^n\left((x_j^2\otimes 1)\Phi-\Phi x_j^2\right)+2\left((t_\rho\otimes 1)\Phi-\Phi t_\rho\right)\\
&+2\sum_{\alpha\in R^+}a^{-2\alpha}\left((\text{Ad}_a(e_\alpha^2)\otimes 1)\Phi-\Phi\text{Ad}_a(e_{\alpha}^2)\right)\\
&-2\sum_{\alpha\in R^+}a^{-\alpha}\left((k_\alpha\text{Ad}_a(e_\alpha)\otimes 1)\Phi-\Phi k_\alpha\text{Ad}_a(e_\alpha)\right)
\end{aligned}
$$

Using the intertwining property, we get
$$
\begin{aligned}
(x_j^2\otimes 1)\Phi-\Phi x_j^2=&-2(1\otimes x_j)\Phi x_j+(1\otimes x_j^2)\Phi
\\
(t_\rho\otimes 1)\Phi-\Phi t_\rho=&-(1\otimes t_\rho)\Phi
\\
(\text{Ad}_a(e_\alpha^2)\otimes 1)\Phi-\Phi\text{Ad}_a(e_{\alpha}^2)
=&-2(1\otimes\text{Ad}_a(e_{\alpha}))\Phi\text{Ad}_a(e_{\alpha})+(1\otimes \text{Ad}_a(e_{\alpha}^2))\Phi\\
=&-2a^{2\alpha}(1\otimes e_\alpha)\Phi e_\alpha+a^{2\alpha}(1\otimes e_\alpha^2)\Phi
\\
(k_\alpha\text{Ad}_a(e_\alpha)\otimes 1)\Phi-\Phi k_\alpha\text{Ad}_a(e_\alpha)
=&-(1\otimes k_\alpha)\Phi\text{Ad}_a(e_{\alpha})-(k_\alpha\otimes\text{Ad}_a(e_{\alpha}))\Phi\\
=&-a^\alpha(1\otimes k_\alpha)\Phi e_\alpha-a^\alpha(k_\alpha\otimes e_\alpha)\Phi
\end{aligned}
$$

Therefore we obtain the following
\begin{prop}
The following triple is a factorisation of $\Omega$:
$$
\tau^l=\sum_{\alpha\in R^+}(k_\alpha\otimes e_\alpha)
$$
$$
\tau^r=\sum_{j=1}^n(x_j\otimes x_j)+2\sum_{\alpha\in R^+}e_\alpha\otimes e_\alpha-\sum_{\alpha\in R^+}k_\alpha\otimes e_\alpha
$$
$$
d=\frac{1}{2}\sum_{j=1}^nx_j^2-t_\rho+\sum_{\alpha\in R^+}e_{\alpha}^2
$$
\end{prop}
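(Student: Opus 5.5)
The plan is to verify the defining identity of a factorisation directly. For arbitrary $\mathfrak g$-modules $E,E',U$ and any intertwiner $\Phi\in\text{Hom}_{\mathfrak g}(E,E'\otimes U)$, I will expand $\frac12\bigl((\Omega\otimes1)\Phi-\Phi\Omega\bigr)$ term by term and compare it with $\tau^l\Phi+\Phi*\tau^r+(1\otimes d)\Phi$.

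The only inputs are two facts. The first is the intertwining relation $\Phi X=(X\otimes1+1\otimes X)\Phi$ for $X\in\mathfrak g$, equivalently $(X\otimes 1)\Phi=\Phi X-(1\otimes X)\Phi$. The second is the convention $\Phi*(x\otimes y)=(1\otimes S(x))\Phi\, y$, so that $\Phi*(x\otimes y)=-(1\otimes x)\Phi y$. I will use the expression
$$
\Omega=\sum_j x_j^2+2t_\rho+2\sum_{\alpha\in R^+}e_\alpha^2-2\sum_{\alpha\in R^+}k_\alpha e_\alpha ,
$$
obtained from $f_\alpha=e_\alpha-k_\alpha$. This is just the radial decomposition at $a=1$. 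Using it avoids tracking the factors $a^{\pm\alpha}$, which cancel in any case, as the displayed computation before the proposition shows.

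The key algebraic step is a general formula for quadratic elements. For $X,Y\in\mathfrak g$, apply the intertwining relation twice, using that $X\otimes1$ commutes with $1\otimes Y$:
$$
(XY\otimes1)\Phi-\Phi XY=-(1\otimes X)\Phi Y-(1\otimes Y)\Phi X+(1\otimes YX)\Phi .
$$
With $X=Y$ this gives the $x_j^2$ and $e_\alpha^2$ lines displayed before the proposition. For $X=k_\alpha$, $Y=e_\alpha$, rewrite the middle term with the intertwining relation once more:
$$
-(1\otimes e_\alpha)\Phi k_\alpha=-(k_\alpha\otimes e_\alpha)\Phi-(1\otimes e_\alpha k_\alpha)\Phi .
$$
The $(1\otimes e_\alpha k_\alpha)\Phi$ terms then cancel, leaving $-(1\otimes k_\alpha)\Phi e_\alpha-(k_\alpha\otimes e_\alpha)\Phi$. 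The linear term is immediate: $(t_\rho\otimes1)\Phi-\Phi t_\rho=-(1\otimes t_\rho)\Phi$.

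Now multiply by the coefficients in $\Omega$, halve, and sort the resulting terms by type.

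Terms of the form $(\cdot\otimes\cdot)\Phi$ give $\sum_\alpha(k_\alpha\otimes e_\alpha)\Phi$. This is $\tau^l\Phi$.

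Terms of the form $(1\otimes\cdot)\Phi\,\cdot$ are $-(1\otimes x_j)\Phi x_j$, $-2(1\otimes e_\alpha)\Phi e_\alpha$ and $+(1\otimes k_\alpha)\Phi e_\alpha$. By the $*$-convention these equal $\Phi*\bigl(\sum_j x_j\otimes x_j+2\sum_\alpha e_\alpha\otimes e_\alpha-\sum_\alpha k_\alpha\otimes e_\alpha\bigr)=\Phi*\tau^r$.

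Terms of the form $(1\otimes\cdot)\Phi$ are $\frac12 x_j^2$, $-t_\rho$ and $e_\alpha^2$. Together these give $(1\otimes d)\Phi$.

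Since the decomposition holds equally after conjugating by $\text{Ad}_a$, the triple is a factorisation for every $a$. The main obstacle is bookkeeping of signs and orderings. Three points need care: the antipode in the $*$-action, which turns $-(1\otimes x)\Phi y$ into $+\Phi*(x\otimes y)$; the non-commutativity of $k_\alpha$ and $e_\alpha$ in $U(\mathfrak g)$; and the factor $\frac12$ against the coefficient $2$ in $\Omega$. I would recheck the $k_\alpha e_\alpha$ term most carefully, since the sign of $\tau^l$ and of the $k_\alpha\otimes e_\alpha$ part of $\tau^r$ both come from it.
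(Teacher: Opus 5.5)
Your proposal is correct and follows the paper's argument. Both expand $\frac12\bigl((\Omega\otimes1)\Phi-\Phi\Omega\bigr)$ using $\Omega=\sum_j x_j^2+2t_\rho+2\sum_\alpha e_\alpha^2-2\sum_\alpha k_\alpha e_\alpha$, derive the same three intertwining identities, and sort the terms into $\tau^l\Phi$, $\Phi*\tau^r$ and $(1\otimes d)\Phi$. The only difference is that the paper first uses centrality to pass to $\Ad_a(\Omega)$ at a general $a\in A$, where the factors $a^{\pm\alpha}$ then cancel, whereas working at $a=1$ makes that step unnecessary.
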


This factorisation satisfies the hypotheses and hence gives the boundary KZB-type equation for N-point spherical Whittaker functions.

Since $k_\alpha=e_\alpha-f_\alpha$, the right tensor
may equivalently be written as
$$
\tau^r=\sum_{j=1}^r x_j\otimes x_j+\sum_{\alpha\in R_+}(e_\alpha+f_\alpha)\otimes e_\alpha
$$
Consequently, for this constant factorization we have
$$
\tilde\tau(a)=\sum_{\alpha\in R_+}a^{-\alpha}(e_\alpha+f_\alpha)\otimes e_\alpha
$$

Substituting the explicit formula into the equation gives the fully expanded operator
$$
\begin{aligned}
\tilde D_i=&\sum_{s=1}^r(x_s)_{U_i}\partial_{x_s}-\sum_{\alpha\in R_+}(k_\alpha)_V(e_\alpha)_{U_i}-\sum_{j=1}^{i-1}\sum_{\alpha\in R_+}(k_\alpha)_{U_j}(e_\alpha)_{U_i}\\
  &-\left(
    \frac12\sum_{s=1}^r x_s^2-t_\rho
    +\sum_{\alpha\in R_+}e_\alpha^2
    \right)_{U_i}\\
  &-\sum_{j=i+1}^N\left(
    \sum_{s=1}^r(x_s)_{U_i}(x_s)_{U_j}
    +\sum_{\alpha\in R_+}
     (e_\alpha+f_\alpha)_{U_i}(e_\alpha)_{U_j}
    \right)\\
  &+\sum_{p=1}^r
    \chi_p a^{-\alpha_p}
    (e_{\alpha_p}+f_{\alpha_p})_{U_i}.
\end{aligned}
$$

\section{Classical superintegrable spin Toda chains}

In this section we construct the classical counterpart of our construction.

Recall that a classical integrable system on a symplectic manifold $(M,\omega)$ of dimension $2n$ may be regarded as a Lagrangian fibration:
$$
p: M\to B
$$
here $p$ is a Poisson map to $B$ which is equipped with the trivial Poisson structure, and the fibers are Largagian submanifolds of $M$. It follows that $\text{dim} B=\frac{1}{2}\text{dim}M$. Algebraically, this amounts to finding a Poisson-commutative subalgebra
$$
I:=p^*C^\infty(B)\subset C^\infty(M)
$$
with $n$ independent generators $\{I_1,...,I_n\}$. Typically, $I_1$ is quadratic in the momenta and serves as the Hamiltonian.

Analogously, a superintegrable system consists of the following data:
$$
p_1: M\to P,\ \ \ \ p_2: P\to B
$$
here $p_1$ and $p_2$ are surjective Poisson maps, and $B$ is again equipped with the trivial Poisson structure. The superintegrability condition is
$$
\text{dim}M=\text{dim}P+\text{dim}B
$$
A generic fibre of $p_2$ is then a disjoint union of symplectic leaves of $P$. Algebraically, this corresponds to:
$$
I=p_1^*p_2^* C^\infty(B)\subset J=p_1^* C^\infty(P)\subset A=C^\infty(M)
$$
here $I$ is the algebra of commuting Hamiltonians, with $k\leq n$ independent generators $I_1,...,I_k$, $J$ is the algebra of integrals of motion, with $2n-k$ independent generators $J_1,...,J_{2n-k}$, subject to the conditions
$$
\{I_r,I_s\}=0,\ \ \ \{I_r,J_t\}=0,\ \ \forall 1\leq r,s\leq k, 1\leq t\leq 2n-k
$$

Notice that when $k=n$, the algebras $I$ and $J$, and hence $B=\text{Spec}I$ and $P=\text{Spec}J$ coincide, and we recover ordinary Liouville integrable systems.

In general, a superintegrable system is still exactly solvable, with invariant tori of even lower dimension, such systems theresore play an important role in mathematical physics. A large family of superintegrable systems, including spin Calogero-Moser chains and spin Ruijsenaars models, is constructed from symplectic leaves of moduli spaces of flat connections on Riemann surfaces. For a comprehensive account of these constructions, see \cite{rsc}.

We now apply this method to construct the classical spin Toda chain.

The phase spaces are symplectic leaves of $T^*(G^{n+1})/(K\times G^n\times N)$, here the action of $\mathbf{G}:=K\times G^n\times N$ is induced by the following action on $G^{n+1}$:
$$
(k,h_1,...,h_n,n)\cdot(g_0,...,g_n)=(kg_0h_1^{-1},h_1g_1h_2^{-1},...,h_ng_nn^{-1})
$$
After trivializing $T^*(G^{n+1})\cong\mathfrak{g}^{*n+1}\times G^{n+1}$ by right translation, the action can be written as
$$
\begin{aligned}
&(k,h_1,...,h_n,n)\cdot(x_0,...,x_n,g_0,...,g_n)\\
=&(\text{Ad}_k^*(x_0),\text{Ad}_{h_1}^*(x_1),...,\text{Ad}_{h_n}^*(x_n),kg_0h_1^{-1},h_1g_1h_2^{-1},...,h_ng_nn^{-1})
\end{aligned}
$$

This action is Hamiltonian with respect to the canonical symplectic structure on the cotangent bundles, with moment map given by
$$
\begin{aligned}
\mu:T^*(G^{n+1})&\to \mathfrak{k}^*\times \mathfrak{g}^{*n}\times\mathfrak{n}^*\\
(x_0,x_1,...,x_n,g_0,g_1,...,g_n)&\mapsto (\pi_{\mathfrak{k}}(x_0),x_1-\text{Ad}_{g_0^{-1}}^*(x_0),...,x_n-\text{Ad}_{g_{n-1}^{-1}}^*(x_{n-1}),-\pi_{\mathfrak{n}}(\text{Ad}_{g_n^{-1}}^*x_n))
\end{aligned}
$$
Here for any subalgebra $\mathfrak{l}\subset\mathfrak{g}$ we denote by $\pi_{\mathfrak{l}}: \mathfrak{g}^*\to\mathfrak{l}^*$ the corresponding projection.

This action is clearly the combination of the left action of $K\times G^n$ and the right action of $G^n\times N$, which are both Hamiltonian and commute with each other. The moment maps are
$$
\begin{aligned}
\mu_L:T^*(G^{n+1})&\to \mathfrak{k}^*\times \mathfrak{g}^{*n}\\
(x_0,x_1,...,x_n,g_0,g_1,...,g_n)&\mapsto (\pi_{\mathfrak{k}}(x_0),x_1,...,x_n)
\\
\mu_R:T^*(G^{n+1})&\to \mathfrak{g}^{*n}\times\mathfrak{n}^*\\
(x_0,x_1,...,x_n,g_0,g_1,...,g_n)&\mapsto (-\text{Ad}_{g_0^{-1}}^*(x_0),...,-\text{Ad}_{g_{n-1}^{-1}}^*(x_{n-1}),-\pi_{\mathfrak{n}}(\text{Ad}_{g_n^{-1}}^*x_n))
\end{aligned}
$$

We use the following two maps to model our superintegrable systems:
\begin{prop}
The maps
$$
p_1=\mu_L\times\mu_R: T^*(G^{n+1})/\mathbf{G}\to P/\mathbf{G}
$$
and
$$
p_2:P/\mathbf{G}\to (\mathfrak{g}^*/G)^{n+1},\ \ (a,x_1,...,x_n,y_1,...,y_n,b)\mapsto (-y_1,x_1,...,x_n)
$$
are well-defined Poisson maps, here we denote
$$
P:=\{(a,x_1,...,x_n,y_1,...,y_n,b)\in\mathfrak{k}^*\times \mathfrak{g}^{*n}\times\mathfrak{g}^{*n}\times\mathfrak{n}^*:a\in-\pi_\mathfrak{k}(\mathcal{O}_{y_1}),b\in-\pi_{\mathfrak{n}}(\mathcal{O}_{x_n}),x_i\in-\mathcal{O}_{y_{i+1}},\forall i\}
$$
and $\mathcal{O}_x$ denotes the coadjoint orbit of $x\in\mathfrak{g}^*$
\end{prop}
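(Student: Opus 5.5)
We need to prove that $p_1$ and $p_2$ are well-defined Poisson maps. The plan is to do this by Hamiltonian reduction in stages, using that $\mu_L$ and $\mu_R$ are the moment maps of two commuting Hamiltonian actions.

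\emph{Step 1: $p_1$ is Poisson and lands in $P$.} On $T^*(G^{n+1})$, a moment map for a Hamiltonian action is Poisson from the symplectic structure to the (linear) Lie--Poisson structure, possibly with a sign for the right action. Since the left action of $K\times G^n$ and the right action of $G^n\times N$ commute, $\mu_L\times\mu_R$ is Poisson into $(\mathfrak k^*\times\mathfrak g^{*n})\times(\mathfrak g^{*n}\times\mathfrak n^*)$. Here the second factor carries the opposite Lie--Poisson bracket, which I will absorb into the sign conventions of $\mu_R$.

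Next I check that the image lies in $P$. Write $y_{i+1}=-\Ad^*_{g_i^{-1}}x_i$. Then $x_i\in-\mathcal O_{y_{i+1}}$, $a=\pi_{\mathfrak k}(x_0)\in-\pi_{\mathfrak k}(\mathcal O_{y_1})$ and $b\in-\pi_{\mathfrak n}(\mathcal O_{x_n})$, all directly from the formulas. One should also confirm that the extra coordinates are indexed consistently: $x_0$ enters only through $a$ and $y_1$.

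\emph{Step 2: passing to quotients.} The group $\mathbf G$ acts on $P$ by the coadjoint action on each component, with each group factor acting on the matching slots. The map $\mu_L\times\mu_R$ is $\mathbf G$-equivariant, because moment maps of commuting actions are equivariant for their own group and invariant under the other. It therefore descends to $T^*(G^{n+1})/\mathbf G\to P/\mathbf G$. Poisson brackets of invariant functions on a quotient are computed upstairs, and pullback of invariant functions along an equivariant Poisson map preserves brackets. Hence the descended $p_1$ is Poisson.

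\emph{Step 3: $p_2$ is well defined and Poisson.} The components $(-y_1,x_1,\dots,x_n)$ are only defined up to the coadjoint action, so $p_2$ is well defined as a map to $(\mathfrak g^*/G)^{n+1}$, i.e.\ on Casimir functions $C^\infty(\mathfrak g^*)^G$ of each factor. Pulling back such an invariant function $\phi$ gives either $\phi(x_i)$ or $\phi(-y_1)$. Both are Casimirs of the Lie--Poisson structure on the relevant $\mathfrak g^*$ factor, so they Poisson-commute with every $\mathbf G$-invariant function on $P$. In particular they commute among themselves, and $p_2$ is Poisson onto the target with the zero bracket.

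The main obstacle is that $\mathfrak k^*$ and $\mathfrak n^*$ are only attached through the constraints defining $P$. I expect to handle this by noting that $P$ is a Poisson subvariety: it is the image of a Poisson map, so it is a union of symplectic leaves, namely the projections of the orbits. The relations $a\in-\pi_{\mathfrak k}(\mathcal O_{y_1})$ and $x_i\in-\mathcal O_{y_{i+1}}$ show that $\phi(-y_1)$ and $\phi(x_i)$ are well defined and interchangeable. They also make the image of $p_2$ consistent, with $x_i$ and $-y_{i+1}$ lying in the same orbit.

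A secondary point is smoothness of the quotients. I would treat the quotients in the algebraic sense, working with invariant functions, or restrict to the open subset where the action is free, as is standard in \cite{rsc}.
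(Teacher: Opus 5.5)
Your proposal is the standard moment-map argument that the paper leaves implicit; the paper only says the proof is immediate. The equivariance and Casimir arguments in Steps 1--3 are sound. One step fails as written, and it is the one you flag as the main obstacle.

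\textbf{The false step.} It is not true in general that the image of a Poisson map is a union of symplectic leaves. The inclusion of a proper open subset of a coadjoint orbit into $\mathfrak g^*$ is a Poisson map, but its image is not a union of leaves. A Poisson map only pushes Hamiltonian vector fields forward, and the lifted flows need not be complete. You also check only that the image of $\mu_L\times\mu_R$ lies in $P$, not that it equals $P$, yet your argument calls $P$ the image.

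\textbf{The repair.} Let the full product $(K\times G^n)\times(G^n\times N)$ act on $\mathfrak k^*\times\mathfrak g^{*n}\times\mathfrak g^{*n}\times\mathfrak n^*$, with an independent coadjoint action in each slot. Each defining condition of $P$ is preserved.
\begin{enumerate}
\item For $a\in-\pi_{\mathfrak k}(\mathcal O_{y_1})$, use $\pi_{\mathfrak k}\circ\Ad^*_k=\Ad^*_k\circ\pi_{\mathfrak k}$ for $k\in K$, together with the $G$-invariance of $\mathcal O_{y_1}$.
\item For $b\in-\pi_{\mathfrak n}(\mathcal O_{x_n})$, use the $N$-equivariance of $\pi_{\mathfrak n}$.
\item The condition $x_i\in-\mathcal O_{y_{i+1}}$ is preserved trivially.
\end{enumerate}
So $P$ is a union of orbits of this product group, hence a union of symplectic leaves of the product Lie--Poisson structure. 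The functions vanishing on $P$ therefore form a Poisson ideal. This is exactly what makes the brackets on $P$ and on $P/\mathbf G$ well defined.

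If you want to keep your ``image'' formulation, you need two facts. First, $\mu_L\times\mu_R$ is equivariant for this full product group, not only for the diagonal $\mathbf G$. Second, it is onto $P$. For surjectivity, given a point of $P$, choose $g_0$ with $a=-\pi_{\mathfrak k}(\Ad^*_{g_0}y_1)$ and $g_i$ with $x_i=-\Ad^*_{g_i}y_{i+1}$ for $1\le i\le n-1$. Choose $g_n$ with $b=-\pi_{\mathfrak n}(\Ad^*_{g_n^{-1}}x_n)$, and put $x_0=-\Ad^*_{g_0}y_1$. With either replacement, the rest of your proof goes through.

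A minor point on Step 1: with the paper's minus sign, $\mu_R$ is the moment map of the left action $g\mapsto gh^{-1}$. It is therefore Poisson for the same Lie--Poisson structure as $\mu_L$, and no opposite bracket enters.
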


The proof is immediate.

\begin{defn}
The phase spaces of spin Toda chains are parametrized by coadjoint orbits:
$$
\mathcal{O}_K\in\mathfrak{k}^*/K,\ \mathcal{O}_N\in\mathfrak{n}^*/N,\ \mathcal{O}_1,...,\mathcal{O}_n\in\mathfrak{g}^*/G
$$
and are given by
$$
S(\mathcal{O}_K,\mathcal{O}_N,\mathcal{O}_1,...,\mathcal{O}_n):=\mu^{-1}(\mathcal{O}_K\times\mathcal{O}_1\times...\mathcal{O}_n\times\mathcal{O}_N)/\mathbf{G}
$$
\end{defn}

The rest of this section is devoted to the proof of the following theorem:
\begin{thm}
Let $P(\mathcal{O}_K,\mathcal{O}_N,\mathcal{O}_1,...,\mathcal{O}_n)$ and $B(\mathcal{O}_K,\mathcal{O}_N,\mathcal{O}_1,...,\mathcal{O}_n)$ be the images of \ \\ \ \  $S(\mathcal{O}_K,\mathcal{O}_N,\mathcal{O}_1,...,\mathcal{O}_n)$ under $p_1$ and $p_2\circ p_1$, respectively. Then the triple $(S,P,B)$ forms a superintegrable system.
\end{thm}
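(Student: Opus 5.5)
The plan is to follow the standard template for superintegrable systems obtained by Hamiltonian reduction of cotangent bundles, as in the construction of spin Calogero--Moser chains \cite{rsc}. Three things have to be checked. First, $S=S(\mathcal{O}_K,\mathcal{O}_N,\mathcal{O}_1,\dots,\mathcal{O}_n)$ is a symplectic manifold, at least on a dense open subset where the $\mathbf{G}$-action on $\mu^{-1}(\mathcal{O}_K\times\mathcal{O}_1\times\dots\times\mathcal{O}_n\times\mathcal{O}_N)$ is free and proper. Second, the restricted maps $p_1:S\to P(\dots)$ and $p_2:P(\dots)\to B(\dots)$ are surjective Poisson maps and $B$ carries the zero Poisson bracket. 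Third, the dimension identity $\dim S=\dim P+\dim B$ holds.

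The first two points are largely formal. $S$ is a symplectic leaf of the Poisson manifold $T^*(G^{n+1})/\mathbf{G}$, by Marsden--Weinstein reduction along coadjoint orbits. The Poisson property of $p_1$ follows from the preceding Proposition: $\mu_L$ and $\mu_R$ are moment maps for commuting actions, so they Poisson-commute, and each is Poisson onto its dual Lie algebra. Surjectivity onto $P(\dots)$ holds by the definition of $P$ as an image. For $B$, the functions pulled back via $p_2$ are built from $\mathrm{Ad}^*$-invariant functions of $x_i$ and of $-y_1$, that is, Casimirs of $\mathfrak{g}^*$. These Casimirs Poisson-commute with everything in $J=p_1^*C^\infty(P)$. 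This gives both $\{I_r,I_s\}=0$ and $\{I_r,J_t\}=0$, and shows that generic fibres of $p_2$ are unions of symplectic leaves of $P$.

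The main work is the dimension count. I will compute both sides in terms of $d=\dim G$, $r=\operatorname{rank}$, $m=\dim N$ (so $\dim K=m$), and the orbit dimensions, for generic (regular) orbits.

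For $S$, write $\dim\mu^{-1}(\dots)/\mathbf{G}$ as $2\dim G^{n+1}-2\dim\mathbf{G}$ plus the dimensions of the orbits $\mathcal{O}_K,\mathcal{O}_1,\dots,\mathcal{O}_n,\mathcal{O}_N$. This equals $2(n+1)d-2(m+nd+m)+\dim\mathcal{O}_K+\dim\mathcal{O}_N+\sum_i\dim\mathcal{O}_i$.

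For $B$, the image consists of $n+1$ generic $G$-orbits. These are subject to constraints: $\mathcal{O}_{x_i}=-\mathcal{O}_{y_{i+1}}$ is fixed to be $\mathcal{O}_{i}$ by the moment condition $x_i-\mathrm{Ad}^*(x_{i-1})\in\mathcal{O}_i$. I expect $\dim B=(n+1)r$, possibly corrected by the conditions $a\in\mathcal{O}_K$ and $b\in\mathcal{O}_N$.

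For $P$, fibre over $B$. Over a point of $B$ we have the space of tuples with $x_i\in\mathcal{O}(\cdot)$, $a\in\mathcal{O}_K\cap-\pi_{\mathfrak{k}}(\mathcal{O}_{y_1})$, $b\in\mathcal{O}_N\cap-\pi_{\mathfrak{n}}(\mathcal{O}_{x_n})$, modulo $\mathbf{G}$. I will compute this as a product of reductions, namely $\mathcal{O}_K\times\overline{\mathcal{O}}\,/\!/K$, the middle $G$-reductions, and $\overline{\mathcal{O}}\times\mathcal{O}_N/\!/N$, and add $\dim B$.

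The expected obstacle is the Whittaker boundary. The $N$-action is not reductive, so generic freeness and the dimension of $\pi_{\mathfrak{n}}^{-1}(\mathcal{O}_N)\cap\mathcal{O}$ modulo $N$ must be handled by hand. My first attempt would be Kostant's theorem: for a nondegenerate character orbit $\mathcal{O}_N=\{\chi\}$, the group $N$ acts freely on $\pi_{\mathfrak{n}}^{-1}(\chi)\cap\mathcal{O}$, and the quotient is a point for each regular orbit $\mathcal{O}$, via the Kostant slice. That gives a boundary contribution $0$ symplectically and $r$ in the base. The $K$-boundary gives $\dim\mathcal{O}_K$ via the generic freeness of $K$ on the regular part of $\mathfrak{p}^*$. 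With these inputs the two sides should agree. Degenerate $\mathcal{O}_N$ would be reduced to the Levi of the corresponding parabolic in the same way.
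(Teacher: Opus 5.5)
Your architecture matches the paper's. You compute $\dim S$, take $\dim B=(n+1)r$, and obtain $\dim P$ as $\dim B$ plus the dimension of the $p_2$-fibre. That fibre is a product of a $K$-reduction of $\mathcal{O}^0$, the $n$ spaces $\mathcal{M}(-\mathcal{O}^i,\mathcal{O}^{i-1},\mathcal{O}_i)$, and an $N$-reduction of $\mathcal{O}^n$. Your $\dim S$ via the reduction formula is a valid substitute for the paper's explicit chart $S\cong T^*A\times\mathcal{O}_K\times\prod_i\mathcal{O}_i\times\mathcal{O}_N$. With $d=r+2m$ it gives $2r$ plus the orbit dimensions, and the action is in fact free everywhere: the stabiliser of $(g_0,\dots,g_n)$ is $K\cap gNg^{-1}=\{1\}$ with $g=g_0\cdots g_n$. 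Your $K$-boundary argument is also sound. Two steps, however, do not go through as written.

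\textbf{The $N$-boundary.} The theorem concerns generic $\mathcal{O}_N\in\mathfrak{n}^*/N$. One-point $N$-orbits are exactly the characters, i.e.\ the functionals vanishing on $[\mathfrak{n},\mathfrak{n}]$, which form an $r$-dimensional subspace of the $m$-dimensional $\mathfrak{n}^*$. Once $N$ is non-abelian (already for $\mathfrak{sl}_3$), a generic $\mathcal{O}_N$ is positive-dimensional. Kostant's theorem says nothing about such orbits, and ``reduce to a Levi'' only addresses the remaining degenerate characters. Note that the paper's classical Hamiltonian $H_2^n$ involves $\mu^N_\alpha$ for all $\alpha>0$, not just the simple roots. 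What you need is $\dim(\mathcal{O}\cap\pi_{\mathfrak{n}}^{-1}(-\mathcal{O}_N))/N=\dim\mathcal{O}_N$ for every $\mathcal{O}_N$, and the obstacle you anticipate is not there. For regular semisimple $x$, $\mathfrak{g}^x$ is a Cartan subalgebra and contains no non-zero nilpotent element, so $\mathfrak{n}\cap\mathfrak{g}^x=0$. Hence $N$ acts locally freely on every generic orbit $\mathcal{O}$, and $\pi_{\mathfrak{n}}|_{\mathcal{O}}$ is a submersion. The $N$-reduction then has dimension $\dim\mathcal{O}-2\dim N+\dim\mathcal{O}_N=\dim\mathcal{O}_N$, uniformly in $\mathcal{O}_N$. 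Kostant's ``the quotient is a point'' is just the special case $\dim\mathcal{O}_N=0$.

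\textbf{Closing the count.} The condition $x_i-\mathrm{Ad}^*(x_{i-1})\in\mathcal{O}_i$ does not fix the orbit of $x_i$ to be $\mathcal{O}_i$. If it did, only $\mathcal{O}^0$ would vary, giving $\dim B\le r$, and the balance would fail. It constrains only the difference: $\mathcal{O}_i\subset\mathcal{O}^i-\mathcal{O}^{i-1}$. This is an open condition on pairs of large orbits, because the diagonal moment map is submersive at locally free points. Likewise, $\mathcal{O}_K\subset\pi_{\mathfrak{k}}(\mathcal{O}^0)$ and $\mathcal{O}_N\subset-\pi_{\mathfrak{n}}(\mathcal{O}^n)$ are open conditions by the same submersivity, so $\dim B=(n+1)r$ with no correction.

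You also never evaluate the middle factors. The same reduction formula gives $\dim\mathcal{M}(-\mathcal{O}^i,\mathcal{O}^{i-1},\mathcal{O}_i)=4m+\dim\mathcal{O}_i-2d=\dim\mathcal{O}_i-2r$. This uses local freeness for $\mathcal{O}_i\neq 0$ and large generic $\mathcal{O}^{i-1},\mathcal{O}^i$, which is the fact the paper quotes. These terms $-2r$ are exactly what makes $\dim P=(n+1)r+\dim\mathcal{O}_K+\sum_i(\dim\mathcal{O}_i-2r)+\dim\mathcal{O}_N$ equal to $\dim S-\dim B=(1-n)r+\dim\mathcal{O}_K+\sum_i\dim\mathcal{O}_i+\dim\mathcal{O}_N$. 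Without them, ``the two sides should agree'' is an assertion rather than a verification.
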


To prove this theorem, it remains to verify the balancing condition for dimensions. In order to compute the dimensions, we make use of the following description of the geometry of $S$.

\begin{prop}
There is an isomorphism of symplectic manifolds:
$$
S(\mathcal{O}_K,\mathcal{O}_N,\mathcal{O}_1,...,\mathcal{O}_n)\cong T^*A\times\mathcal{O}_K\times\mathcal{O}_1\times...\times\mathcal{O}_n\times\mathcal{O}_N
$$
\end{prop}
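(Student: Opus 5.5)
We will prove the isomorphism by reduction in stages. We first use the $G^n$ factors to collapse the chain, then use the Iwasawa decomposition to slice the remaining $K\times N$ action.

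\textbf{Step 1: eliminate the middle $G$-factors.} The action of $(h_1,\dots,h_n)\in G^n$ on the group coordinates is free. Using $h_n$, we can normalise $g_n=1$; inductively, using $h_{n-1},\dots,h_1$, we normalise $g_{n-1}=\dots=g_1=1$. The residual group coordinate is then the product $g=g_0g_1\cdots g_n$, and the residual symmetry is only $K\times N$. The moment map conditions $x_i-\mathrm{Ad}^*_{g_{i-1}^{-1}}x_{i-1}\in\mathcal{O}_i$ now say that the successive differences $x_i-x_{i-1}$ (suitably transported) lie in $\mathcal{O}_i$. This is the standard ``gluing'' picture for moduli of flat connections on a chain, as in the spin Calogero--Moser case of \cite{rsc}. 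It identifies
\[
\mu^{-1}(\mathcal{O}_K\times\mathcal{O}_1\times\dots\times\mathcal{O}_n\times\mathcal{O}_N)/G^n
\]
with the reduced space
\[
\bigl(T^*G\times\mathcal{O}_1\times\dots\times\mathcal{O}_n\bigr)/\!\!/_{\mathcal{O}_K,\mathcal{O}_N}(K\times N),
\]
where $K$ acts on $T^*G$ from the left together with the diagonal coadjoint action on the orbits, and $N$ acts from the right. The point is that reduction of $T^*G\times T^*G$ by the diagonal $G$ at an orbit $\mathcal{O}$ equals $T^*G\times\mathcal{O}$; we will verify this by writing the map explicitly in the right trivialisation.

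\textbf{Step 2: slice by Iwasawa.} By the global Iwasawa decomposition $G=KAN$, the $K\times N$ action on $G$ is free with global slice $A$. Hence
\[
T^*G\cong T^*K\times T^*A\times T^*N,
\]
and reducing the $K$ factor at $\mathcal{O}_K$ and the $N$ factor at $\mathcal{O}_N$ gives, by the shifting trick $T^*L/\!\!/_{\mathcal{O}}L\cong\mathcal{O}$ applied to $L=K$ and $L=N$ separately, the product
\[
T^*A\times\mathcal{O}_K\times\mathcal{O}_1\times\dots\times\mathcal{O}_n\times\mathcal{O}_N .
\]
Concretely, we will define the inverse map from the right-hand side by sending $(a,p,\xi_K,\xi_1,\dots,\xi_n,\xi_N)$ to the point with $g_0=a$ and all other $g_i=1$. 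Its $x$-data are then recovered from $p\in\mathfrak{h}^*$, $\xi_K$ and $\xi_N$ via the decomposition
\[
\mathfrak{g}^*=\mathfrak{k}^*\oplus\mathfrak{h}^*\oplus\mathfrak{n}^*
\]
(the dual of $\mathfrak{g}=\mathrm{Ad}_{a^{-1}}\mathfrak{k}\oplus\mathfrak{h}\oplus\mathfrak{n}$, as in the proof of the Iwasawa radial component theorem), together with the partial sums of the $\xi_i$. We will check that this map is well defined, that it lands in the level set, and that it is a bijection onto the quotient.

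\textbf{Step 3: symplecticity.} We will pull back the canonical form of $T^*(G^{n+1})$ along this section and show that it equals $\omega_{T^*A}+\sum\omega_{KKS}$. This is the step we expect to be the main obstacle. The cross terms between $T^*A$ and the orbit factors come from the $a$-dependence of the splitting $\mathrm{Ad}_{a^{-1}}\mathfrak{k}\oplus\mathfrak{h}\oplus\mathfrak{n}$, and their cancellation must be checked. We would first try to avoid the explicit computation by invoking the general theorem on commuting reductions (reduction in stages, with $K$, $N$ and $G^n$ commuting). Since $T^*(K\times A\times N)$ is a product of cotangent bundles, it reduces factorwise, which yields symplecticity without computing cross terms. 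Only if that fails would we compute $\langle x,dg\,g^{-1}\rangle$ directly along the section.

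Finally, we note a caveat: the isomorphism requires freeness and the regularity of the chosen orbits, i.e.\ $\mathcal{O}_N$ should be generic, and it holds on the regular locus, which is the setting in which the dimension count in the main theorem of this section is needed.
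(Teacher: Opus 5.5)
Your argument is correct, but it takes a different route from the paper's. The paper does not reduce in stages. It applies the explicit map $\pi_n$, which transports the momenta to $x^i=\mathrm{Ad}^*_{g_0\cdots g_{i-1}}x_i$ and keeps the single group element $g=g_0\cdots g_n$ at the right end, so that $K$ acts diagonally on all $x^i$ and $N$ acts only on $g$. It then slices by $g=a\in A$ and solves the linear constraints $\pi_{\mathfrak k}(x^0)=\mu^K$, $x^i-x^{i-1}=\mu^i$, $-\pi_{\mathfrak n}(\mathrm{Ad}^*_{a^{-1}}x^n)=\mu^N$ root by root. This gives closed formulas for $x^i_{\pm\alpha}$ and $x^i_0=p-\sum_{j>i}\mu^j_0$, hence the bijection with $(p,a,\mu^K,\mu^1,\dots,\mu^n,\mu^N)$. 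Those formulas are exactly what the paper reuses afterwards to express $H_2^n$ and $D_k$. The compatibility of the symplectic structures, however, is taken for granted there.

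Your approach obtains the symplectic statement structurally. Since $K\times A\times N\to G$ is $K\times N$-equivariant, its cotangent lift is an equivariant symplectomorphism $T^*G\cong T^*K\times T^*A\times T^*N$ that intertwines the moment maps. Reduction in stages then gives the product form with no cross terms to compute. So you prove more than the paper checks. Your explicit section, which solves for the momentum through $\mathfrak g=\mathrm{Ad}_{a^{-1}}\mathfrak k\oplus\mathfrak h\oplus\mathfrak n$, still recovers the paper's coordinates if you need them for the Hamiltonians.

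Two small points need fixing.
\begin{enumerate}
\item In your gauge $g_1=\dots=g_n=1$, the compensating elements are $h_1=\dots=h_n=n$. So it is $N$, not $K$, that acts on the orbit factors. The residual action you describe belongs to the opposite gauge $g_0=\dots=g_{n-1}=1$, which is essentially the paper's. Either gauge works, but in both one of $K,N$ acts diagonally, so ``reduces factorwise'' is not literally true. You need the standard lemma $(T^*L\times Y)/\!\!/_{\mathcal O}L\cong\mathcal O\times Y$ (up to sign conventions), for $L$ acting on $T^*L$ by translations and Hamiltonianly on $Y$.
\item Your genericity caveat is unnecessary. $\mathbf G$ acts freely and properly on $G^{n+1}$; indeed $G^{n+1}\cong\mathbf G\times A$ equivariantly. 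Hence the lifted action is free and proper, $\mu$ is a submersion, and the isomorphism holds for all orbits, as the paper states. Genericity only enters later, in the dimension counts for $\mathcal M$ and $B$.
\end{enumerate}
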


\begin{proof}
For any $1\leq i\leq n$, we have an isomorphism of Poisson manifolds:
$$
\begin{aligned}
\pi_i: (\mathfrak{g}^{*n+1}\times G^{n+1})/\mathbf{G}&\cong K\backslash(\mathfrak{g}^{*n+1}\times G)/N\\
(x_0,...,x_n,g_0,...,g_n)&\mapsto(x_0,\text{Ad}_{g_0}^*(x_1),...,\text{Ad}_{g_0...g_{i-1}}^*(x_i),\text{Ad}_{g_n^{-1}...g_{i+1}^{-1}}^*(x_{i+1}),...,\text{Ad}_{g_n^{-1}}(x_n),g_0g_1...g_n)
\end{aligned}
$$
Furthermore, using the Iwasawa decomposition $G=KAN$, we get
$$
K\backslash(\mathfrak{g}^{*n+1}\times G)/N\cong\mathfrak{g}^{*n+1}\times A
$$
Under the isomorphism $\pi_n$, $S(\mathcal{O}_K,\mathcal{O}_N,\mathcal{O}_1,...,\mathcal{O}_n)$ becomes
$$
\begin{aligned}
&S(\mathcal{O}_K,\mathcal{O}_N,\mathcal{O}_1,...,\mathcal{O}_n)\\
\cong&\{(x^{0},...,x^{n},a)\in\mathfrak{g}^{*n+1}\times A:\pi_{\mathfrak{k}}(x^{0})\in\mathcal{O}_K,-\pi_{\mathfrak{n}}(\text{Ad}_{a^{-1}}^*(x^{n}))\in\mathcal{O}_N,x^{1}-x^{0}\in\mathcal{O}_1,...,x^n-x^{n-1}\in\mathcal{O}_n\}
\end{aligned}
$$
Set
$$
\mu^K=\pi_\mathfrak{k}(x^0),\mu^N=-\pi_{\mathfrak{n}}(\text{Ad}_{a^{-1}}^*(x^{n}))
$$
$$
\mu^1=x^1-x^0,...,\mu^n=x^n-x^{n-1}
$$
For $\mu\in\mathfrak{g}$, denote by $\mu_0$ its Cartan component and by $\mu_\alpha$, $\alpha\in\Delta$ its root components, with respect to the decomposition
$$
\mathfrak{g}=\mathfrak{h}\oplus\bigoplus_{\alpha\in\Delta}\mathfrak{g}_\alpha
$$
Under the isomorphism $\mathfrak{g}^*\cong\mathfrak{g}$ induced by Killing form, we have natural identifications $\mathfrak{k}^*\cong\mathfrak{k}$ and $\mathfrak{n}^*\cong\mathfrak{n}_-$. In particular, denote
$$
\mu^K=2\sum_{\alpha\in\Delta^+}\mu_\alpha^Kk_\alpha,\ \ \ \mu^N=\sum_{\alpha\in\Delta^+}\mu^N_\alpha f_\alpha
$$
we have
$$
x_\alpha^0-x_{-\alpha}^0=\mu_\alpha^K,\ \ \ -a^\alpha x_{-\alpha}^n=\mu_\alpha^N\ \ \ \forall\alpha\in\Delta^+
$$
$$
x_\alpha^1-x_\alpha^0=\mu_\alpha^1,\ ...\ x_\alpha^n-x_\alpha^{n-1}=\mu_\alpha^n\ \ \ \forall\alpha\in\Delta
$$
from which we can obtain: for $\alpha\in\Delta^+$,
$$
x_\alpha^i=\mu_\alpha^K-a^{-\alpha}\mu_\alpha^N-\sum_{j=1}^n\mu_{-\alpha}^j+\sum_{j=1}^i\mu_\alpha^j
$$
$$
x_{-\alpha}^i=-a^{-\alpha}\mu_\alpha^N-\sum_{j=i+1}^n\mu_{-\alpha}^j
$$
For the Cartan components, set $p:=x_0^n$. Then clearly
$$
\forall 0\leq i\leq n,\ \ x^i_0=p-\sum_{j=i+1}^n\mu_0^j
$$
Therefore, $(x^0,...,x^n,a)$ and $(p,a,\mu^K,\mu^1,...,\mu^n,\mu^N)$ are in one-to-one correspondence. Regarding $(p,a)\in\mathfrak{h}^*\times A\cong T^*A$, we have proved
$$
S(\mathcal{O}_K,\mathcal{O}_N,\mathcal{O}_1,...,\mathcal{O}_n)\cong T^*A\times\mathcal{O}_K\times\mathcal{O}_1\times...\times\mathcal{O}_n\times\mathcal{O}_N
$$
\end{proof}

It follows from this proposition that
$$
\text{dim}S(\mathcal{O}_K,\mathcal{O}_N,\mathcal{O}_1,...,\mathcal{O}_n)=2r+\text{dim}(\mathcal{O}_K)+\sum_{j=1}^n\text{dim}(\mathcal{O}_j)+\text{dim}(\mathcal{O}_N)
$$

By definition, we have:
$$
\begin{aligned}
&B(\mathcal{O}_K,\mathcal{O}_N,\mathcal{O}_1,...,\mathcal{O}_n)\\
=&\{(\mathcal{O}^0,...,\mathcal{O}^n)\in(\mathfrak{g}^*/G)^{n+1}:\mathcal{O}_K\subset\pi_{\mathfrak{k}}(\mathcal{O}^0),\mathcal{O}_N\subset-\pi_{\mathfrak{n}}(\mathcal{O}^n), \mathcal{O}_1\in\mathcal{O}^1-\mathcal{O}^0,...,\mathcal{O}_n\in\mathcal{O}^n-\mathcal{O}^{n-1}\}
\end{aligned}
$$
while the fiber of $p_2$ is given by
$$
p_2^{-1}(\mathcal{O}^0,...,\mathcal{O}^n)\cong\left( (\mathcal{O}^0\cap\pi_{\mathfrak{k}}^{-1}(\mathcal{O}_K))/K\right)\times\prod_{i=1}^n\mathcal{M}(-\mathcal{O}^i,\mathcal{O}^{i-1},\mathcal{O}_i)\times\left( (\mathcal{O}^n\cap\pi_{\mathfrak{n}}^{-1}(-\mathcal{O}_N))/N\right)
$$
Here, for any coadjoint orbits $\mathcal{O}_1,\mathcal{O}_2,\mathcal{O}_3\subset\mathfrak{g}^*$, we define
$$
\mathcal{M}(\mathcal{O}_1,\mathcal{O}_2,\mathcal{O}_3):=\{(x,y,z)\in\mathcal{O}_1\times\mathcal{O}_2\times\mathcal{O}_3:x+y+z=0\}/G
$$
We recall the following results from \cite{res}
\begin{prop}
For generic sufficiently large coadjoint orbits $\mathcal{O}_1,\mathcal{O}_2$ and a non-zero orbit $\mathcal{O}_3$, we have:
$$
\text{dim}\mathcal{M}(\mathcal{O}_1,\mathcal{O}_2,\mathcal{O}_3)=\text{dim}(\mathcal{O}_3)-2r
$$
\end{prop}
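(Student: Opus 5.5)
The plan is to use the $G$-action to fix the first entry of the triple, and then count equations and residual symmetries. Choose $x\in\mathcal{O}_1$ regular semisimple; this is possible since $\mathcal{O}_1$ is generic. Every $G$-orbit in $\{(x',y,z):x'+y+z=0\}$ contains a point with $x'=x$, and the stabiliser of this normalisation is $G_x$, a torus $T_x$ of dimension $r$ (up to a finite group). This gives
$$
\mathcal{M}(\mathcal{O}_1,\mathcal{O}_2,\mathcal{O}_3)\cong\{z\in\mathcal{O}_3:\ -x-z\in\mathcal{O}_2\}/G_x .
$$
I will then show two things on a nonempty open subset of $\mathcal{O}_3$. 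First, the constraint set has dimension $\dim\mathcal{O}_3-r$. Second, $G_x$ acts on it locally freely. Together these give $\dim\mathcal{O}_3-2r$.

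For the first point, describe the constraint through invariants. Let $f_1,\dots,f_r$ be generators of $S(\mathfrak g^*)^G$, so that $\mathcal{O}_2$ is (an open part of) a level set of $(f_1,\dots,f_r)$ when it is generic, i.e. regular semisimple. The constraint is then $F_k(z):=f_k(-x-z)=c_k$ for $z\in\mathcal{O}_3$, which is $r$ equations. It suffices to show that the differentials $dF_k|_{T_z\mathcal{O}_3}$ are linearly independent at the relevant points. Identify $\mathfrak g^*\cong\mathfrak g$ via the Killing form. The tangent space is $T_z\mathcal{O}_3=[\mathfrak g,z]$, and $\nabla f_k(w)$ spans the centraliser $\mathfrak c(w)$ for regular $w=-x-z$. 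The pairing is
$$
\langle\nabla f_k(w),[\xi,z]\rangle=\langle[z,\nabla f_k(w)],\xi\rangle .
$$
So independence is equivalent to injectivity of $\operatorname{ad}_z:\mathfrak c(x+z)\to\mathfrak g$.

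Here the hypothesis ``sufficiently large'' is used. Rescale $x$ by $t\to\infty$. Then $\mathfrak c(tx+z)\to\mathfrak c(x)=\mathfrak h_x$ (a Cartan subalgebra), so for large $t$ it suffices to have $\operatorname{ad}_z$ injective on $\mathfrak h_x$. This holds exactly when the root components $z_\alpha$ (with respect to $\mathfrak h_x$) are nonzero for a set of roots spanning $\mathfrak h^*$. Since $\mathcal{O}_3\neq0$ and $\mathfrak g$ is simple, $\mathcal{O}_3$ is not contained in $\mathfrak h_x$, and its $\operatorname{Ad}$-saturation shows that the above condition holds on a nonempty Zariski-open subset of $\mathcal{O}_3$. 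Finally, $\mathcal{O}_2$ is chosen so that its level set actually meets this open set (genericity of $\mathcal{O}_2$ relative to $x+\mathcal{O}_3$). This gives dimension $\dim\mathcal{O}_3-r$ for the constraint set.

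For the second point, the infinitesimal action of $\mathfrak h_x$ at $z$ is $h\mapsto[h,z]$. This is injective by exactly the same root-spanning condition, so $T_x$ acts locally freely there, and we subtract another $r$.

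The main obstacle is making ``generic sufficiently large'' precise in the first point. One must ensure that the open set where $\operatorname{ad}_z|_{\mathfrak c(x+z)}$ is injective actually meets the fibre $\{-x-z\in\mathcal{O}_2\}$, and that $-x-z$ stays regular there so that the $f_k$ cut out $\mathcal{O}_2$ cleanly. I would handle this by the perturbative argument above: take $\mathcal{O}_1=\mathcal{O}_{tx}$ and $\mathcal{O}_2$ near $-\mathcal{O}_{tx}$ with $t\gg0$. Then $-tx-z$ is a small perturbation of a regular element, and transversality follows from the $t=\infty$ limit by openness.
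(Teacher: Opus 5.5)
Your argument is correct. The paper gives no proof of this proposition; it quotes the result from Reshetikhin's work. There it is the standard symplectic-reduction count: $\mathcal{M}(\mathcal{O}_1,\mathcal{O}_2,\mathcal{O}_3)$ is the reduction at $0$ of $\mathcal{O}_1\times\mathcal{O}_2\times\mathcal{O}_3$ by the diagonal $G$-action. Where that action is locally free, its dimension is $\sum_i\dim\mathcal{O}_i-2\dim G$, which equals $\dim\mathcal{O}_3-2r$ for regular $\mathcal{O}_1,\mathcal{O}_2$. Fixing $x\in\mathcal{O}_1$ is the same count done in stages. What your version adds is an explicit form of the genericity hypothesis.

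In fact your two conditions are literally one. If $c\in\mathfrak c(x+z)$, then $[z,c]=0$ iff $[x,c]=0$, so $\ker(\operatorname{ad}_z|_{\mathfrak c(x+z)})=\mathfrak c(x)\cap\mathfrak c(z)=\mathfrak h_x\cap\mathfrak c(z)$ exactly. This is also the infinitesimal stabiliser of the triple $(x,-x-z,z)$. Hence the limit $\mathfrak c(tx+z)\to\mathfrak h_x$ is not needed for transversality. ``Sufficiently large'' is only needed to produce one good $z$ with $tx+z$ regular. Then $F$ has full rank somewhere, and an open set of regular orbits $\mathcal{O}_2$ is actually hit.

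Two points deserve a line more.

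First, nonemptiness of the good set in $\mathcal{O}_3$ needs more than $\mathcal{O}_3\not\subset\mathfrak h_x$. The bad set is a finite union: over hyperplanes $W\subset\mathfrak h_x^*$ spanned by roots, take $\mathcal{O}_3\cap\mathfrak c(h_W)$, where $h_W\neq0$ spans the annihilator of $W$ in $\mathfrak h_x$. By irreducibility of $\mathcal{O}_3$ (connectedness plus analyticity over the reals), one of these pieces would be all of $\mathcal{O}_3$. Then $h_W$ would centralise the span of $\mathcal{O}_3$, which is $\mathfrak g$, a contradiction.

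Second, as written you only compute the dimension of an open piece of $\mathcal{M}$. The same kernel computation shows that the critical points of $F|_{\mathcal{O}_3}$ are exactly the $z$ with $x+z$ non-regular or $\mathfrak h_x\cap\mathfrak c(z)\neq0$. By Sard, for generic $\mathcal{O}_2$ every point of the constraint set is therefore good, and the formula $\dim\mathcal{O}_3-2r$ holds for all of $\mathcal{M}$ rather than just an open part.
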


\begin{prop}
For generic coadjoint orbits, the dimension of the base space is
$$
\text{dim}B(\mathcal{O}_K,\mathcal{O}_N,\mathcal{O}_1,...,\mathcal{O}_n)=(n+1)r
$$
\end{prop}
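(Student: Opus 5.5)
We need to show that, for generic orbits, $\dim B=(n+1)r$. The plan is to show that $B$ is an open, hence full-dimensional, subset of $(\mathfrak{g}^*/G)^{n+1}$. This target has dimension $(n+1)r$, since $\mathfrak{g}^*/G\cong\mathfrak{h}^*/W$ by the Chevalley isomorphism $S(\mathfrak g)^G\simeq S(\mathfrak h)^W$ recalled in Section 2. The upper bound $\dim B\le (n+1)r$ is then immediate, and the work lies in the lower bound.

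For the lower bound we use the explicit description of $S$ from the previous proposition. A point of $S$ corresponds to $(p,a,\mu^K,\mu^1,\dots,\mu^n,\mu^N)$, and the composite $p_2\circ p_1$ sends it to the tuple of orbits $(G\cdot x^0,\dots,G\cdot x^n)$, where the $x^i$ are given by the formulas derived in that proof. We fix a point of $S$ and study the differential of the map
$$
\Psi=(C_1(x^i),\dots,C_r(x^i))_{0\le i\le n},
$$
where $C_1,\dots,C_r$ are generators of $S(\mathfrak g)^G$.

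The key steps are the following.

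First, we vary the Cartan momentum $p\in\mathfrak{h}^*$ (which ranges over all of $\mathfrak h^*$) while keeping $a$ and all the $\mu$'s fixed. This shifts every $x^i_0$ simultaneously by the same $\delta p$.

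Second, we vary the points $\mu^j\in\mathcal{O}_j$ inside their orbits. For instance, moving $\mu^i$ changes $x^i,\dots,x^n$ but not $x^0,\dots,x^{i-1}$. This gives a triangular structure, which we exploit inductively in $i$.

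Third, we need to check that for generic $x^{i-1}$ the map $\mu^i\mapsto (C_k(x^{i-1}+\mu^i))_k$ from $\mathcal{O}_i$ to $\mathfrak h^*/W$ is a submersion at a generic point. Its differential is $\xi\mapsto \langle dC_k(x^i),[\xi,\mu^i]\rangle=\langle [\mu^i,dC_k(x^i)],\xi\rangle$. Since $x^i$ is regular, the $dC_k(x^i)$ span the centraliser $\mathfrak z(x^i)$, which is a Cartan subalgebra. So surjectivity amounts to $\mathfrak z(x^i)\cap\mathfrak z(\mu^i)=0$, and this holds generically when $\mathcal O_i$ is a nonzero orbit in general position (this is the same genericity behind the dimension count of $\mathcal{M}$ quoted from \cite{res}).

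Fourth, we get the $0$-th component from the variation of $p$ together with $\mu^K$. The function $x^0=\mu^K+(\text{stuff})+p$ has Cartan part free through $p$, and $C_k$ restricted to a regular semisimple slice $\mathfrak h$ has full-rank differential. The shift in $p$ also moves the higher $x^i$, but by triangularity those directions are already covered by the $\mu^i$.

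Combining these steps shows that $d\Psi$ has rank $(n+1)r$ at a generic point. Hence the image $B$ contains an open set, and $\dim B=(n+1)r$.

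The main obstacle is making "generic" precise in the third and fourth steps. Fixing $\mathcal O_K,\mathcal O_N$ constrains $x^0$ and $x^n$, and we must make sure these constraints do not force the $x^i$ to be non-regular, or force $\mathfrak z(x^i)\cap\mathfrak z(\mu^i)\ne0$. I would first handle this by exhibiting one explicit point: take $a=1$, $\mu^j$ regular semisimple in a Cartan transversal to $\mathfrak h$, and $p$ large and regular. Upper semicontinuity of corank then propagates the full rank to a Zariski-open dense set of parameters.
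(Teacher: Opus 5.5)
The paper gives no argument for this count; it simply quotes it from \cite{res}. Your overall strategy is sound: the trivial upper bound, full rank of $d\Psi$ at one point, then semicontinuity. The test point in your last paragraph is also a good choice. But the mechanism you give for the rank rests on a false claim.

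Step 2 fails in the coordinates you use. From the formulas in the proof of the previous proposition, $x^k_0=p-\sum_{j>k}\mu^j_0$ and $x^k_{-\alpha}=-a^{-\alpha}\mu^N_\alpha-\sum_{j>k}\mu^j_{-\alpha}$. So moving $\mu^i$ with $(p,a,\mu^K,\mu^N)$ fixed \emph{does} move $x^0,\dots,x^{i-1}$. For $k\ge i$, $x^k$ moves only by $\sum_{\alpha>0}(\delta\mu^i_\alpha-\delta\mu^i_{-\alpha})e_\alpha$, not by $\delta\mu^i$. This is forced by the two-sided boundary conditions. With $(a,\mu^K,\mu^N)$ fixed, $x^0$ is pinned modulo $\mathfrak p$ (by $\mathcal O_K$) and $x^n$ modulo $\mathfrak b$ (by $\mathcal O_N$). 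So you cannot hold $x^{i-1}$ fixed and let $x^i=x^{i-1}+\mu^i$ sweep $x^{i-1}+\mathcal O_i$, which is exactly what Step 3 presupposes. Hence the Jacobian is not block triangular, the blocks computed in Steps 3--4 are not the actual ones, and the rank claim is unproved as written.

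The simplest repair uses your own test point.
\begin{itemize}
\item A variation of $\mu^i$ tangent to $S$ is obtained by splitting $[\xi,\mu^i]=b-q$ with $b\in\mathfrak b$ and $q\in\bigoplus_{\alpha>0}\mathbbm{C}(e_\alpha+f_\alpha)=\mathfrak p\cap\mathfrak h^\perp$. Then set $\delta x^k=q$ for $k<i$ and $\delta x^k=b$ for $k\ge i$. The $p$-variation is $\delta x^k=\delta p$ for all $k$.
\item Put $p=tp_0$ with $p_0$ regular and let $t\to\infty$. Since $x^k-p$ does not depend on $t$, $\mathfrak z(x^k)=\mathfrak h+O(1/t)$ (as subspaces) for every $k$. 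Hence $\langle\mathfrak z(x^k),q\rangle=O(1/t)$ and $\langle\mathfrak z(x^k),b\rangle=\langle\mathfrak h,[\xi,\mu^i]\rangle+O(1/t)$.
\item So your triangular picture holds to leading order. The $p$-column is the same invertible block in every row. The $\mu^i$-column is $\xi\mapsto([\xi,\mu^i])_{\mathfrak h}$ in rows $k\ge i$ and $0$ in rows $k<i$.
\item That map is onto $\mathfrak h$ iff $\mathfrak z(\mu^i)\cap\mathfrak h=0$, which is your transversality condition. So $d\Psi$ has rank $(n+1)r$ for $t\gg0$, and your semicontinuity step finishes the proof.
\end{itemize}
Alternatively, you can make the triangularity exact, so that Steps 3--4 are correct verbatim, by compensating at the $N$-end: move $a$ and the point of $\mathcal O_N$. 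That requires $\mathfrak b+[\mathfrak b,x^n]=\mathfrak g$ for generic $x^n$, i.e.\ an open $B$-orbit on $\mathfrak n^*$. This is an extra input you would have to supply.
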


We can now prove the superintegrability of spin Toda chains.

\begin{proof}[Proof of the theorem]
For generic $\mathcal{O}_K$, the linear projection $\pi_{\mathfrak{k}}$ is transversal to the level sets of the invariant polynomials, hence
$$
\text{dim}(\mathcal{O}\cap\pi_{\mathfrak{k}}^{-1}(\mathcal{O}_K))/K=\text{dim}(\mathcal{O}_K)
$$
For the same reason:
$$
\text{dim}(\mathcal{O}\cap\pi_{\mathfrak{n}}^{-1}(-\mathcal{O}_N))/N=\text{dim}(\mathcal{O}_N)
$$
We get
$$
\begin{aligned}
\text{dim}P(\mathcal{O}_K,\mathcal{O}_N,\mathcal{O}_1,...,\mathcal{O}_n)=&\text{dim}B(\mathcal{O}_K,\mathcal{O}_N,\mathcal{O}_1,...,\mathcal{O}_n)+\text{dim}(\text{generic fiber})\\
=&(n+1)r+\text{dim}(\mathcal{O}\cap\pi_{\mathfrak{k}}^{-1}(\mathcal{O}_K))/K\\
&+\sum_{i=1}^n\text{dim}\mathcal{M}(-\mathcal{O}^i,\mathcal{O}^{i-1},\mathcal{O}_i)+\text{dim}(\mathcal{O}\cap\pi_{\mathfrak{n}}^{-1}(-\mathcal{O}_N))/N\\
=&(n+1)r+\text{dim}(\mathcal{O}_K)+\sum_{i=1}^n(\text{dim}(\mathcal{O}_i)-2r)+\text{dim}(\mathcal{O}_N)\\
=&(1-n)r+\text{dim}(\mathcal{O}_K)+\sum_{i=1}^n\text{dim}(\mathcal{O}_i)+\text{dim}(\mathcal{O}_N)\\
=&\text{dim}S(\mathcal{O}_K,\mathcal{O}_N,\mathcal{O}_1,...,\mathcal{O}_n)-\text{dim}B(\mathcal{O}_K,\mathcal{O}_N,\mathcal{O}_1,...,\mathcal{O}_n)
\end{aligned}
$$
Therefore, the spin Toda chain is superintegrable.
\end{proof}

By the general theory of superintegrable systems, any function on the base manifold $B(\mathcal{O}_K,\mathcal{O}_N,\mathcal{O}_1,...,\mathcal{O}_n)$, that is, any invariant function on any factor of $(\mathfrak{g}^*/G)^{n+1}$, will give rise to Hamiltonians for the spin Toda system. In particular, consider the quadratic Casimir functions:
$$
H_2^k(x)=\frac{1}{2}(x^k,x^k)=\frac{1}{2}(x_0^k,x_0^k)+\sum_{\alpha>0}x_\alpha^kx_{-\alpha}^k,\ \ \ 0\leq k\leq n
$$

From these, we obtain the second-order Hamiltonian $H_2^n$ and first-order functions $D_k=H_2^k-H_2^{k-1}$. Expressing them in the coordinates of $T^*A\times\mathcal{O}_K\times\mathcal{O}_1\times...\times\mathcal{O}_n\times\mathcal{O}_N$, we get
$$
\begin{aligned}
H_2^n&=\frac{1}{2}(p,p)+\sum_{\alpha>0}x_\alpha^nx_{-\alpha}^n\\
&=\frac{1}{2}(p,p)+\sum_{\alpha>0}\left(\mu_{\alpha}^K-a^{-\alpha}\mu_\alpha^N+\sum_{j=1}^n(\mu_\alpha^j-\mu_{-\alpha}^j)\right)(-a^{-\alpha}\mu_\alpha^N)\\
&=\frac{1}{2}(p,p)+\sum_{\alpha>0}a^{-2\alpha}(\mu_\alpha^N)^2-\sum_{\alpha>0}a^{-\alpha}\left(\mu_{\alpha}^K+\sum_{j=1}^n(\mu_\alpha^j-\mu_{-\alpha}^j)\right)\mu_\alpha^N
\end{aligned}
$$
which is clearly the classical counterpart of the spin Toda Hamiltonian $\mathcal{H}^{\sigma^{(n)}}$, and
$$
\begin{aligned}
D_k=&\frac{1}{2}(x^k,x^k)-\frac{1}{2}(x^{k-1},x^{k-1})\\
=&\frac{1}{2}(\mu^k+x^{k-1},\mu^k+x^{k-1})-\frac{1}{2}(x^{k-1},x^{k-1})\\
=&(\mu^k,x^{k-1})+\frac{1}{2}(\mu^k,\mu^k)\\
=&(\mu_0^k,x_0^{k-1})+\sum_{\alpha>0}(x_\alpha^{k-1}\mu_{-\alpha}^k+x_{-\alpha}^{k-1}\mu_\alpha^k)+\frac{1}{2}(\mu_0^k,\mu_0^k)+\sum_{\alpha>0}\mu_\alpha^k\mu_{-\alpha}^k\\
=&(\mu_0^k,p-\sum_{j=k}^n\mu_0^j)+\frac{1}{2}(\mu_0^k,\mu_0^k)\\
&+\sum_{\alpha>0}\left((\mu_\alpha^K-a^{-\alpha}\mu_\alpha^N-\sum_{j=1}^n\mu_{-\alpha}^j+\sum_{j=1}^{k-1}\mu_\alpha^j)\mu_{-\alpha}^k+(-a^{-\alpha}\mu_\alpha^N-\sum_{j=k}^n\mu_{-\alpha}^j)\mu_\alpha^k+\mu_\alpha^k\mu_{-\alpha}^k\right)\\
=&(\mu_0^k,p)-\frac{1}{2}(\mu_0^k,\mu_0^k)-\sum_{j=k+1}^n(\mu_0^k,\mu_0^j)+\sum_{j=1}^{k-1}\sum_{\alpha>0}(\mu_\alpha^j-\mu_{-\alpha}^j)\mu_{-\alpha}^k-\sum_{j=k+1}^n\sum_{\alpha>0}\mu_{-\alpha}^j(\mu_\alpha^k+\mu_{-\alpha}^k)\\
&+\sum_{\alpha>0}(\mu_\alpha^K\mu_{-\alpha}^k-a^{-\alpha}\mu_\alpha^N(\mu_\alpha^k+\mu_{-\alpha}^k))-\sum_{\alpha>0}(\mu_{-\alpha}^k)^2\\
=&(\mu_0^k,p)+(r^K_k-r^N_k-\kappa_k)+\sum_{j=1}^{k-1}r^l_{jk}-\sum_{j=k+1}^n r^r_{kj}
\end{aligned}
$$

Here
$$
r^K_k=\sum_{\alpha>0}\mu_\alpha^K\mu_{-\alpha}^k,\ \ \ r^N_k=\sum_{\alpha>0}a^{-\alpha}\mu_\alpha^N(\mu_\alpha^k+\mu_{-\alpha}^k)
$$
$$
\kappa_k=\frac{1}{2}(\mu_0^k,\mu_0^k)+\sum_{\alpha>0}(\mu_{-\alpha}^k)^2
$$
$$
r^l_{jk}=\sum_{\alpha>0}(\mu_\alpha^j-\mu_{-\alpha}^j)\mu_{-\alpha}^k
$$
$$
r^r_{kj}=(\mu_0^k,\mu_0^j)+\sum_{\alpha>0}\mu_{-\alpha}^j(\mu_\alpha^k+\mu_{-\alpha}^k)
$$
This is the classical counterpart of the boundary KZB type operators and of the factorisation.

\begin{rem}
By a method entirely analogous to that in \cite{rsh}, the calculations for the classical case presented here can be updated to the quantum case to construct the boundary KZB-type operators. These operators now arise from the differences of elements in $Z(U(\mathfrak{g}))^{\otimes(N+1)}$, acting naturally on the space of functions
$$
\begin{aligned}
&C^\infty((K,\sigma)\backslash G/(G,\pi_{U_1})\backslash G/...\backslash G/(G,\pi_{U_N})\backslash G/(N,\chi^*))\\
=&\Bigl\{f\in C^\infty(G^{N+1})\otimes V\otimes\mathbf{U}: f(kg_0h_1^{-1},h_1g_1h_2^{-1},...,h_Ng_N n^{-1})\\
&\ \ \ \ \ \ \ \ \ \ \ \ \ \ \ \ \ \ \ \ \ \ \ \ \ \ \ \ \ \ \ \ =\chi(n)^{-1}(\sigma(k)\otimes\pi_{U_1}(g_1)\otimes...\otimes\pi_{U_N}(g_N))f(g_0,...,g_N)\Bigr\}\\
\cong& C^{\infty}((K,\sigma^{(N)})\backslash G/(N,\chi^*))
\end{aligned}
$$
Here $\mathbf{U}=U_1\otimes...\otimes U_N$. In particular, this establishes the commutativity of boundary KZB-type operators:
$$
[\tilde{D}_i,\tilde{D}_j]=0,\ \ \ [\tilde{D}_i,\tilde{H}^{(\sigma^(N),\chi^*)}]=0
$$

Furthermore, N-point spherical-Whittaker functions may be characterized among elementary spherical-Whittaker functions (that is, spherical Whittaker functions of the form $f(g)=\phi_l\circ\pi(g)\phi_r$, with $\phi_l\in\text{Hom}_K(\mathcal{H}_\lambda,V_{\sigma^{(N)}})$ and $\phi_r\in\text{Hom}_N(\mathbbm{C}_\chi,\mathcal{H}_\lambda^{-\infty})$ for some principal series representations.), as the joint eigenfunctions of the commuting boundary KZB-type operators. This phenomenon, together with a more precise description of the wave functions for a broader class of quasi-simple representations, will be investigated further in future work.
\end{rem}

\end{document}